\newtheorem*{rep@theorem}{\rep@title}
\newcommand{\newreptheorem}[2]{%
\newenvironment{rep#1}[1]{%
\def\rep@title{#2 \ref{##1}}%
\begin{rep@theorem}}%
{\end{rep@theorem}}}
\newtheorem{theorem}{Theorem}
\newtheorem{definition}{Definition}
\newtheorem{assumption}{Assumption}
\newtheorem{lemma}{Lemma}
\newcommand{\be}{\begin{equation}} 
\newcommand{\ee}{\end{equation}}
\newcommand{\beq}{\begin{eqnarray}}
\newcommand{\eeq}{\end{eqnarray}}
\newcommand{\rank}{\mathrm{rank}}
\newcommand{\id}{\mathds{1}}
\newcommand{\tr}{{\rm tr}}
\newcommand{\ra}{\rangle}
\newcommand{\la}{\langle}
\newcommand{\B}{\mathcal{B}}
\newcolumntype{C}{>{$}c<{$}}
\begin{document}


\title{Certifying Sets of Quantum Observables with Any Full-Rank State}


\author{Zhen-Peng Xu}
\email{zhen-peng.xu@ahu.edu.cn}
\affiliation{School of Physics and Optoelectronics Engineering, Anhui University, 230601 Hefei, People's Republic of China}
\affiliation{Naturwissenschaftlich-Technische Fakult\"at, Universit\"at Siegen, Walter-Flex-Stra{\ss}e 3, 57068 Siegen, Germany}

\author{Debashis Saha}
\email{saha@iisertvm.ac.in}
\affiliation{School of Physics, Indian Institute of Science Education and Research Thiruvananthapuram, Kerala 695551, India}

\author{Kishor Bharti}
\email{kishor.bharti1@gmail.com}
\affiliation{Institute of High Performance Computing (IHPC), Agency for Science, Technology and Research (A*STAR), 1 Fusionopolis Way, $\#$16-16 Connexis, Singapore 138632, Republic of Singapore}

\author{Ad\'an~Cabello}
\email{adan@us.es}
\affiliation{Departamento de F\'{\i}sica Aplicada II, Universidad de Sevilla, E-41012 Sevilla,
Spain}
\affiliation{Instituto Carlos~I de F\'{\i}sica Te\'orica y Computacional, Universidad de
Sevilla, E-41012 Sevilla, Spain}


\begin{abstract}
We show that some sets of quantum observables are unique up to an isometry and have a contextuality witness that attains the same value for any initial state. We prove that these two properties make it possible to certify any of these sets by looking at the statistics of experiments with sequential measurements and using any initial state of full rank, including thermal and maximally mixed states. We prove that this ``certification with any full-rank state'' (CFR) is possible for any quantum system of finite dimension $d \ge 3$ and is robust and experimentally useful in dimensions $3$ and $4$. In addition, we prove that complete Kochen-Specker sets can be Bell self-tested if and only if they enable CFR. This establishes a fundamental connection between these two methods of certification, shows that both methods can be combined in the same experiment, and opens new possibilities for certifying quantum devices.
\end{abstract}


\maketitle


{\em Introduction.---}Nonlocality \cite{brunner2014bell} and contextuality \cite{budroni2021quantum} are two fundamental predictions of quantum theory. Quantum theory also predicts that, in certain cases, there is an essentially unique way to achieve some specific nonlocal \cite{Yao_self,vsupic2019self,kaniewski2020weak} or contextual~\cite{BRVWCK19,Saha2020sumofsquares,bharti2021graph} correlation. Consequently, the observation of this specific correlation allows us to infer which quantum state has been prepared and which quantum observables have been measured, without making assumptions about the functioning of the devices used in the
experiment \cite{Yao_self,vsupic2019self,BRVWCK19,kaniewski2020weak,Saha2020sumofsquares,bharti2021graph}. 

However, none of the existing ``device-independent'' (DI) certification methods work if the fidelity of the prepared state with respect to a {\em specific pure state} is below a certain threshold. It is this specific pure state that guarantees the uniqueness of the quantum realization in the noiseless (ideal) case. In particular, none of the methods works if the prepared state is maximally mixed. This leads to the question of whether it would be possible to certify quantum observables using correlations produced by measurements on {\em unspecified mixed states}, including the maximally mixed state.

That, in quantum theory, this question may have an affirmative answer is suggested by the observation that, for any quantum system of finite dimension $d\ge 3$, there exist finite sets of observables that produce contextual correlations for any quantum state \cite{Cabello08,Badziag:2009PRL,Yu:2012PRL,Kleinmann:2012PRL}. These sets of observables are called state-independent contextuality (SI-C) sets \cite{SM,CKB2015,budroni2021quantum}. SI-C sets have fundamental applications in quantum information \cite{Cabello:2001PRLb,cleve2004,brassard2005quantum,Cubitt:2010PRL,horodecki2010contextuality,Cabello_QKD,aolita2012pra,Guhne:2013PRA,Canas:PRL2014,Cabello:2018PRLEPSILON,saha2019State,Ji:2021CACM,Gupta2023PRL,Zhen2023PRL} and have been experimentally tested~\cite{kirchmair2009Stateindependent,zhang2013state,huang2003experimental,amselem2009state,ambrosio2013experimental,leupold2018sustained}. 

But the existence of SI-C sets itself leads to another question: are there SI-C sets that are unique up to an isometry? This question is particularly relevant for understanding the mathematical structure of the set of quantum observables. Interestingly, if the answer to this question is positive, then there may be a connection to the question of whether there are quantum observables that can be certified with arbitrary mixed states.

In this Letter, we first show that there are SI-C sets that (i) are unique up to an isometry, and 
(ii) have a SI-C witness ${\cal W}$ that achieves the same value for every initial quantum state. 
These SI-C sets have therefore a characteristic signature that can be experimentally tested: the relations of compatibility between the observables (which are encoded in the expression of ${\cal W}$) and the state-independent value of ${\cal W}$. 

Next, we will show that SI-C sets with properties (i) and (ii) can be certified from the correlations of experiments with sequential measurements performed on {\em any full-rank mixed state}, including thermal and maximally mixed states. As soon as a mixed state of full rank gives the characteristic value for ${\cal W}$, any other state will do so. This leads to a method for certifying quantum observables from correlations that is fundamentally different than
self-testing based on Bell inequalities \cite{Yao_self,vsupic2019self,kaniewski2020weak}, state-dependent contextuality~\cite{BRVWCK19,Saha2020sumofsquares,bharti2021graph}, prepare-and-measure \cite{tavakoli2020self,farkas2019self}, and steering \cite{vsupic2016self,gheorghiu2015robustness,shrotriya2020self}. There are two fundamental differences: (a) The initial state required for the certification is not determined by the set of observables to be certified; any state of full rank can be used. (b) The certification guarantees the state-independent uniqueness (up to an isometry) of the set of observables.

In addition, we show that this new method, named ``certification with any full-rank state'' (CFR), is possible in every finite dimension $d \ge 3$, and provide a way to obtain sets of observables that enable CFR in any $d \ge 3$. We also prove that CFR is robust against experimental imperfections using examples in $d=3$ and $4$, and show how to test the robustness in any other case. 

Finally, we show that, for a fundamental class of SI-C sets, CFR is a necessary condition for Bell self-testing. This points out a connection between two different forms of certification and shows that these two forms can be applied simultaneously in Bell experiments with sequential measurements. This opens up some interesting possibilities which are discussed.


{\em Certification with any full-rank state.---}Unless otherwise indicated, hereafter we will focus on SI-C sets of projectors (rather than general self-adjoint operators) and on a special type of contextuality witness that can be defined from them using the following result, which is a generalization of a result in \cite{cabello2016Simple}, whose proof is in \cite{SM}.

{\em Lemma 1.---}Given a finite set of observables $\{\Pi_i\}$, with possible results $0$ or $1$, and graph of compatibility $G$ (in which each $\Pi_i$ is represented by a vertex $i \in V$ and there is an edge $(i,j) \in E$ if $\Pi_i$ and $\Pi_j$ are compatible), the following inequality holds for any noncontextual hidden-variable (NCHV) theory:
\begin{equation}\label{eq:gnci}
{\cal W} := \sum_{i \in V} w_i\ P_i - \sum_{(i,j) \in E} w_{ij} P_{ij} \overset{\rm NCHV}{\le} \alpha(G,\vec{w}),
\end{equation}
where $\vec{w}=\{w_i\}_{i \in V}$ is a set of positive weights for the vertices of $G$,
$w_{ij} \ge \max{(w_i,w_j)}$, $P_i = P(\Pi_i=1)$ is the probability of obtaining outcome $1$ when measuring observable $\Pi_i$, $P_{ij}=P(\Pi_i=1,\Pi_j=1)$ is the probability of obtaining outcomes $1$ and $1$ when measuring $\Pi_i$ and $\Pi_j$, and $\alpha(G,\vec{w})$ is the weighted independence number of $G$ with vertex weight vector $\vec{w}$ (see \cite{SM} for the definition).

Our first result is the following.

{\em Result 1.---}For any quantum system of any finite dimension $d \ge 3$, there is a finite set of observables $S=\{\Pi_i\}_{i=1}^{n}$ and a functional ${\cal W}$ such that, for any quantum state $\rho$,
${\cal W}(S,\rho) = Q$, 
and, if
$ {\cal W}(S',\rho') = Q$
for a set of observables $S'=\{\Pi'_i\}_{i=1}^{n}$ and
a state $\rho'$ of full rank in dimension $D$, then $S'$ and $S$ are equivalent in the sense that there is a unitary transformation $U$ that, for all $i$,
\begin{equation}
\label{imp0}
\Pi_i \otimes \id^{d_1} \oplus \Pi_i^* \otimes \id^{d_2} = U \Pi'_i U^\dagger, 
\end{equation}
where $\id^{d_1}$ is the identity in dimension $d_1$, with $d_1+d_2 = D/d$, $\Pi_i^*$ is the conjugate of $\Pi_i$, $\otimes$ denotes tensor product, $\oplus$ denotes direct sum, and $U^{\dagger}$ is the conjugate transpose of $U$. 
Moreover, ${\cal W}$ is a SI-C witness since $Q > C$ and
\begin{equation}
{\cal W} \le C
\end{equation}
is a state-independent noncontextuality inequality.

For the witnesses ${\cal W}$ of the form \eqref{eq:gnci}, $C =\alpha(G,\vec{w})$.
If those $d$-dimensional $\Pi_i$ are real (rather than complex), then Eq.~\eqref{imp0} becomes
\begin{equation}
\label{imp}
\Pi_i \otimes \id^{(D/d)} = U \Pi'_i U^{\dagger}.
\end{equation}

The practical consequence of Result~1 is that if, in an ideal experiment with sequential measurements, a set of $n$ measurement devices (one for each observable), combined in sequences as dictated by the form of ${\cal W}$, yields ${\cal W} = Q$ for a state of full rank, then we can be sure that these devices implement $S$ [or an equivalent set in the sense of Eqs.~\eqref{imp0} or \eqref{imp}]. Then, we will say that $S$ enables CFR. The case of nonideal experiments will be discussed later.


\begin{widetext}
\begin{table*}[t!]
\begin{tabular}{CCCCCCCCCCCCCCCCCCCCCC}
\hline\hline 
& v_1 & v_2 & v_3 & v_4 & v_5 & v_6 & v_7 & v_8 & v_9 & v_{10} & v_{11} & v_{12} & v_{13} & v_{14} & v_{15} & v_{16} & v_{17} & v_{18} & v_{19} & v_{20} & v_{21} \\
\hline \rule{0pt}{3.2mm}
v_{i1}\;\; & 0 & 0 & 0 & 1 & 1 & 1 & 1 & 1 & 1 & 1 & 0 & 0 & 1 & 1 & 1 & 1 & 1 & 1 & 1 & 1 & 1 \\
v_{i2}\;\; & 1 & 1 & 1 & 0 & 0 & 0 & \bar{1} & \bar{q} & \bar{g} & 0 & 1 & 0 & 1 & q & g & 1 & q & g & 1 & q & g \\
v_{i3}\;\; & \bar{1} & \bar{q} & \bar{g} & \bar{1} & \bar{q} & \bar{g} & 0 & 0 & 0 & 0 & 0 & 1 & 1 & g & q & q & 1 & g & g & q & 1 \\ 
\hline
w_i\;\; & 4 & 4 & 4 & 4 & 4 & 4 & 4 & 4 & 4 & 7 & 7 & 7 & 7 & 7 & 7 & 7 & 7 & 7 & 7 & 7 & 7 \\ \hline\hline
\end{tabular}
\caption{\label{tab:bbc21}{\bf BBC-21.} Each column $v_i$ corresponds to one observable represented by the projector $|v_i\rangle \langle v_i|$. The column $v_{ij}$ gives the components of $|v_i\rangle$ (unnormalized). $\bar{x}=-x$, $q = e^{2\pi \mathbbm{i}/3}$, and $g=q^2$. Compatible observables correspond to orthogonal vectors. The last row contains optimal weights $w_i$ for a SI-C witness ${\cal W}$ of the form~\eqref{eq:gnci}. The weights $w_{ij}$ in~\eqref{eq:gnci} can be chosen in any way that satisfies $w_{ij} \geq \max\{w_i, w_j\}$.} 
\end{table*}
\end{widetext}


{\em Proof.---}The proof is based on identifying sets enabling CFR in any dimension $d \ge 3$. We will name the SI-C sets using the initials of the authors and the number of projectors in the set. For example, BBC-21~\cite{Bengtsson:2012PLA}, CEG-18~\cite{Cabello:1996PLA}, and YO-13~\cite{Yu:2012PRL}. In other cases, we use the full name rather than the initial, as in Peres-24~\cite{Peres:1991JPA}. In other cases, we use the standard name, as in the Peres-Mermin square~\cite{Peres1990PLA,Mermin1990PRL}. While the details of the proof are specific for each SI-C set, a common step in all proofs is showing that the violation of a full-rank state $\rho'$ implies the same violation for any state.

The proof starts by showing that, in $d=3$, the set of $21$ rank-one projectors in Table~\ref{tab:bbc21} enables CFR. This set, hereafter called BBC-21, was introduced in \cite{Bengtsson:2012PLA} and is the smallest SI-C set of rank-one projectors requiring complex numbers known. The proof that BBC-21 is unique up to unitary transformations, which guarantees that condition (i) for CFR holds, is in \cite{SM}. Using the weights in the last row of Table~\ref{tab:bbc21}, the noncontextual bound of the witness ${\cal W}$ defined in Eq.~\eqref{eq:gnci} is $\alpha(G,\vec{w})=36$, while, for any initial quantum state, the value of ${\cal W}$ is $\vartheta(G,\vec{w})=40$. This proves that BBC-21 also satisfies condition (ii) for CFR.

In $d=4$, we show that three related fundamental SI-C sets enable CFR: (I) CEG-18 \cite{Cabello:1996PLA}, which is the smallest KS set \cite{SM} of rank-one projectors in any dimension (as proven in \cite{Xu:2020PRL}), (II) Peres-24 \cite{Peres:1991JPA}, which is the smallest complete KS set (see Definition~4) of rank-one projectors known, 
and (III) the Peres-Mermin square \cite{Peres1990PLA,Mermin1990PRL}, which is the smallest SI-C set of arbitrary self-adjoint operators (rather than projectors) known. The proofs that these sets are unique up to unitary transformations and the corresponding optimal state-independent contextuality witnesses yielding the same value for any state are in \cite{SM}.

Finally, for any finite dimension $d\ge5$, we prove (see \cite{SM}) that each of the members of a family of SI-C sets of rank-one projectors generated from Peres-24 using a method introduced in \cite{Cabello05} is unique up to unitary transformations and has a SI-C witness producing the same value for any initial state. \hfill\qedsymbol

While existing correlation-based certification methods require preparing a state with a high overlap with a target pure state, the SI-C sets that enable CFR can be certified using any unspecified full-rank state, something that is easier to prepare. A simple strategy is to let an arbitrary state go through randomly chosen measurements \cite{leupold2018sustained}, resulting in a maximally mixed state. Another strategy is to let the system interact with the environment, resulting in a thermal state. Both types of states are of full rank.

Not all SI-C sets enable CFR. For example, Peres-33 \cite{Peres:1991JPA}, which is the KS set of rank-one projectors in $d=3$ with the smallest number of bases known, is not unique up to unitary transformations. 
Interestingly, YO-13~\cite{Yu:2012PRL}, which is the SI-C set with smallest number of rank-one projectors in any dimension (as proven in \cite{Cabello:2016JPA}) and is a subset of Peres-33, enables CFR if two additional conditions are satisfied: (I') The relations of orthogonality between the elements $S'$ are the same as the relations of orthogonality between the elements $S$, and (II') for $\rho'$, the probabilities are normalized for every set of mutually orthogonal projectors summing up to the identity. This is shown in \cite{SM}. Both (I') and (II') can be experimentally tested (as in \cite{leupold2018sustained}).


{\em Robustness.---}The possibility of CFR of SI-C sets is a prediction of quantum theory. Now the question is whether this prediction can be tested in actual experiments or it requires idealizations that cannot be achieved in realistic experiments such as the requirement of perfectly sharp and compatible measurements for all pairs of compatible observables in the SI-C set. In other words, the question is whether CFR is robust against experimental imperfections.

Answering this question requires an additional analysis based on semidefinite programming whose size is related to the size of the SI-C sets. Here, we have performed this analysis for three of the SI-C sets, in dimensions 3 and 4, that we have proven that enable CFR. In all cases, the analysis was performed on a laptop computer and the computational execution time was less than $1$ h. The analysis of the robustness of the CFR of the other SI-C sets can be carried out using higher computational power.

Our result here is that the CFRs based on BBC-21, CEG-18, and Peres-24 are robust. We will also show that the CFR of YO-13 is robust under an extra assumption. Our result requires introducing some definitions. 


{\em Definition 1.---} A set of projectors $\{\Pi_i\} $ is said to be a $(\theta,\epsilon)$ realization of a SI-C set with respect to a contextuality witness ${\cal W}$ of type \eqref{eq:gnci} if, for all states $\ket{\psi}$,
\begin{subequations}
\begin{align}
\sum_{i=1}^n w_i \braket{\psi|\Pi_i|\psi} & \geqslant \theta > \alpha(G,w), \label{eq:rob_realizationa}\\
\braket{\psi|\Pi_i \Pi_j \Pi_i|\psi} & \leqslant \epsilon,\label{eq:rob_realizationb}
\end{align}
\end{subequations}
whenever $i$ and $j$ are adjacent in $G$ (i.e., whenever the corresponding projectors are orthogonal).

The conditions in Eqs.~\eqref{eq:rob_realizationa} and~\eqref{eq:rob_realizationb} are related to the sum of probabilities $\sum_i w_i P_i$ and joint probability $P_{ij}$ in Eq.~\eqref{eq:gnci}. In the ideal case, $\theta=Q$ (defined in Result~1), and $\epsilon=0$, which implies that the quantum value of ${\cal W}$ is $Q$. As $\theta$ is close enough to $Q$ and $\epsilon$ is close enough to $0$, the projectors $\{\Pi_i\}$ have the same rank. See~\cite{SM} for details.

{\em Definition 2.---}A noncontextuality inequality of the form \eqref{eq:gnci} provides an $(\epsilon,r)$-robust CFR of a $(Q,0)$ realization $\{\Pi_i\} $ of a SI-C set, if, for any $(Q - \epsilon,\epsilon)$ realization $\{\Pi'_i\} $ of the SI-C set, there is an isometry $\Phi$ such that
\begin{equation}
|\Phi(\Pi_i) - \Pi'_i| \le \mathcal{O}(\epsilon^r).
\end{equation}

{\em Result 2.---}The contextuality witnesses ${\cal W}$ of the form~\eqref{eq:gnci} for BBC-21, CEG-18, Peres-24, and YO-13 used in Result~1 provide $(\epsilon,1/2)$ robustness when $\epsilon$ is smaller than $0.132$, $0.134$, $0.177$, and $0.208$, respectively. For YO-13, the proof requires the extra assumption that the probabilities of every three mutually orthogonal projectors sum~$1$.

For more details on the proof, see \cite{SM}.

Any witness ${\cal W}$ of the form \eqref{eq:gnci} can be expressed with the joint probabilities of the outcomes of two sequential measurements from $\{A_j\}$. From the observed values satisfying conditions related to the ideality and the orthogonality relations of the projectors, one can certify the projectors and the measurements $A_i$. Moreover, when the experimental value of ${\cal W}$ is close enough to the quantum value, the robustness of the CFR is also ensured. See~\cite{SM} for more details.


{\em Bell self-testing and CFR.---}Bell self-testing \cite{Yao_self} is the task of certifying quantum states and measurements using only the statistics of Bell experiments.
One advantage of Bell self-testing with respect to CFR is that the former does not require projective measurements. One disadvantage, however, is that Bell self-testing requires spacelike separation between the tests. Therefore, an interesting question is whether SI-C sets that allow for CFR can be Bell self-tested and, if so, what is the relation between Bell self-testing and CFR. To address these questions, the following definitions will be useful.

{\em Definition 3.---}(Generalized KS set) A generalized Kochen-Specker (KS) set is a set of projectors of arbitrary rank (not necessarily of rank-one as it is the case in a KS set \cite{Pavicic:2005JPA}) which does not admit an assignment of $0$ or $1$ satisfying that: (I) two orthogonal projectors cannot both be assigned $1$, (II) for every set of mutually orthogonal projectors summing up to the identity, one and only one of them must be assigned $1$.

{\em Definition 4.---}(Complete KS set) The complete KS set associated to a generalized KS set $S$ is the set obtained by adding to $S$ the projectors $\id - \Pi_i - \Pi_j$ for every pair of orthogonal projectors $(\Pi_i,\Pi_j)$ in $S$ that does not belong to a complete basis.

For example, Peres-24 is a complete KS set, but CEG-18 and Peres-33 are not (BBC-21 and YO-13 are not KS sets). 
A complete KS set enables CFR if it satisfies properties (i) and (ii). 

Now we need a way to produce Bell nonlocality using a complete KS set. For that aim, we will define the following nonlocal game.

{\em Definition 5.---}(Context-projector KS game \cite{cleve2004,brassard2005quantum,aolita2012pra})
In each round of the game, a referee gives to one of the players, Alice, one of the contexts (i.e., a set of commuting projectors summing up the identity) of a complete KS set $S$ and asks her to output one of the projectors of this context. In the same round, the referee gives to one spatially separated player, Bob, one of the projectors of the same context and asks him to output $1$ or $0$. Alice and Bob win the round either if Alice outputs the projector given to Bob and Bob outputs $1$, or if Alice outputs a projector different than the one given to Bob and Bob outputs $0$.

This is a game that cannot be won with probability $1$ with classical resources and no communication, but that can be won with probability $1$ if the parties share copies of a qudit-qudit maximally entangled state with $d \ge 3$ and measure a complete KS set in dimension $d$.

Now, we can address the question of whether the SI-C sets that allow for CFR can be Bell self-tested.

{\em Result 3.---}The projectors of a complete KS set can be Bell self-tested if and only if the KS set enables CFR.

The proof is in \cite{SM}. Here, we will focus on some implications of this result. One is that Bell self-testing and CFR can be accomplished simultaneously in an experiment that combines Bell and sequential tests \cite{Cabello:2010PRL,Liu:2016PRL,Gonzales-Ureta2023}. Consider two spatially separated parties, Alice and Bob~1, sharing copies of a qudit-qudit maximally entangled state and performing local measurements of the projectors of a complete KS set $S$. In addition, consider a third party, Bob~2, that receives the system that Bob~1 has measured (we assume that Bob~1's measurements are nondemolition measurements \cite{leupold2018sustained,WangSAdv2022}). Suppose that Bob~2 measures elements of $S$ on this system. Then: (a)~The Alice-Bob~1 statistics can Bell self-test $S$ in Alice's and Bob~1's sides. (b)~The Bob~1-Bob~2 statistics enable CFR of $S$ in Bob~1's and Bob~2's sides (and the Alice-Bob~1 Bell self-test can guarantee that Bob~1's input state is of full rank). (c)~The Alice-Bob~2 statistics conditioned to that Bob~2's measurement is compatible to Bob~1's can Bell self-test $S$ in Alice's and Bob~2's sides. This allows for the simultaneous certification of Bob~1's $S$ by two different methods and opens new possibilities.


{\em Conclusions and future directions.---}In this Letter, we have presented three results that push the field of certification of quantum processes based only on correlations beyond its established limits. Results 1 and 2 allow us to circumvent a conceptual limitation of existing methods, namely, the need of targeting specific pure states. We have proven that this is not necessary: for any quantum system of any finite dimension $d\ge3$, there are sets of quantum observables that can be certified using any full-rank quantum state. This ``certification with any full rank state'' offers interesting possibilities. For example, suppose that 
the same preparation is used to certify via CFR two sets of observables: one of them in dimension $d_1$ and the other in dimension $d_2$. This automatically certifies via CFR that the dimension of the system is lower bounded by the lowest common denominator of $d_1$ and $d_2$. This provides a method to certify quantum systems of high dimensions, something that is difficult in a DI way \cite{PhysRevLett.110.150501,PhysRevLett.112.140503}. Moreover, in principle, CFR becomes more useful as the dimension grows, since preparing a full-rank mixed state is easier than preparing a state with a high overlap with a pure target state.

Result 3 pushes the field in a different sense. It shows that, for a general class of sets of observables, CFR is possible if and only if Bell self-testing is possible. This indicates that there may be a general unified framework for certification based solely on correlations, so that all existing methods can be viewed as particular cases. The precise characterization of this framework constitutes an interesting challenge. On the other hand, Result~3 shows that there are sets of observables that can be simultaneously Bell self tested (using Alice-Bob~1 correlations) and certified via CFR (using Bob~1-Bob~2 correlations). This is interesting as it may lead to a robust method for self-testing L\"uders processes \cite{Luders:1951APL,Pokorny:PRL20} in any finite dimension $d \ge 3$ (which is where observables represented by rank-one projectors have one outcome whose quantum post-measurement state depends on the input state). In the framework of general probabilistic theories, L\"uders processes correspond to ``ideal (or sharp) measurements'' \cite{chiribella2014measurement}: processes that yield the same outcome when repeated and are minimally disturbing (only disturb incompatible observables). The existence of ideal measurements is ``one of the fundamental predictions of quantum mechanics'' \cite{Pokorny:PRL20}. The DI certification of ideal measurements in arbitrary (finite) dimension would require the DI certification of the corresponding quantum 
instruments (which capture both the classical outputs and the corresponding quantum post-measurement states~\cite{Davies:JMA70,Ozawa:JMP1984,Dressel:PRA2013}). Previous works have explored the DI \cite{Wagner2020deviceindependent} and semi-DI \cite{PhysRevResearch.2.033014} certification of instruments corresponding to nonideal qubit measurements. The DI certification of ideal measurements 
would operationally ``bridge the gap between general probabilistic theories and the DI framework'' \cite{Chiribella:2016IC}, blurring the boundaries between three different approaches for understanding quantum theory: DI, general probabilistic theories, and general Bayesian theories \cite{PhysRevResearch.2.042001}, where ideal measurements are central. Future research should go in these directions.
\\


This work was supported by the EU-funded project \href{10.3030/101070558}{FoQaCiA} and the \href{10.13039/501100011033}{MCINN/AEI} (Project No.\ PID2020-113738GB-I00). Z.-P. X.\ acknowledges support from the \href{10.13039/501100001809}{National Natural Science Foundation of China} (Grant No.\ 12305007), Anhui Provincial Natural Science Foundation (Grant No.\ 2308085QA29), and the \href{10.13039/100005156}{Alexander von Humboldt Foundation}.


\appendix


\section*{Supplemental material}


\section{Concepts}
\label{app:A}


\begin{definition}[Ideal measurement~\cite{Chiribella:2016IC}]
	An ideal measurement of an observable $A$
	is a measurement of $A$ that yields the same outcome when it is repeated on the same system
	and does not disturb any observable compatible with $A$. 
\end{definition}

\begin{definition}[Compatible observables] 
	Two observables $A$ and $B$ are compatible
	or jointly measurable if there exists an observable $C$ such that, for every initial
	state, for every outcome $a$ of $A$, the probability of obtaining outcome $a$ for $A$ is
	\begin{equation}
		P(A = a) = \sum_{o\in c_a} P (C = o)
	\end{equation}
	and, for every outcome $b$ of $B$,
	\begin{equation}
		P(B = b) = \sum_{o\in c_b} P (C = o),
	\end{equation}
	where the disjoint union of $\{c_a\}_a$ and the disjoint union of $\{c_b\}_b$ are both equal to the complete set of outcomes of~$C$.
	$C$ is called a {\em refinement} of $A$
	(and $B$). $A$ (and $B$) is called a {\em coarse-graining} of $C$. Therefore, two observables are compatible when they have a common refinement or are both coarse-grains of the same observable.
\end{definition}

\begin{definition}[Ideal observable]
	An ideal or sharp observable is one that can be measured with ideal measurements, and that all its possible coarse-grained versions can also be measured with ideal measurements.
\end{definition}

In quantum theory, ideal observables are represented by self-adjoint operators.

\begin{definition}[SI-C set]
	A state-independent contextuality (SI-C) set in dimension $d$ is a set of ideal observables that produces contextuality for any initial state in dimension $d$.
\end{definition} 

In particular, a set of $n$ ideal observables represented in quantum theory by $n$ projectors $\{\Pi_i\}_{i=1}^n$ is a SI-C set if there is a set of weights $\vec{w}=\{w_i\}_{i=1}^n$ for the vertices of the graph $G$ of compatibility of $\{\Pi_i\}_{i=1}^n$ (in which vertices represent observables and edges connect pairwise compatible observables) such that a noncontextuality inequality of the form \eqref{eq:gnci} is violated by any quantum state in dimension $d$. 

\begin{definition}[KS set]
	A Kochen-Specker (KS) set is a SI-C set of rank-one projectors which does not admit an assignment of $0$ or $1$ satisfying that: (I) two orthogonal projectors cannot both be assigned $1$, (II) for every set of mutually orthogonal projectors summing the identity, one of them must be assigned $1$.
\end{definition} 

There are SI-C sets of rank-one projectors that are not KS sets. Examples are YO-13 \cite{Yu:2012PRL} and BBC-21 \cite{Bengtsson:2012PLA}.

\begin{definition}[Egalitarian SI-C set] 
A SI-C set is egalitarian if it produces, for any state, the same violation of a given noncontextuality inequality.
\end{definition}

In particular, a SI-C set $\{\Pi_i\}$ is egalitarian if there is a set of weights $\vec{w}=\{w_i\}$ for the vertices of the graph $G$ of compatibility of $\{\Pi_i\}$ such that, for any quantum state, the left-hand side of \eqref{eq:gnci} yields $Q > \alpha(G,\vec{w})$.

\begin{definition}[Independence number]
The independence number $\alpha(G,\vec{w})$ of a vertex-weighted graph $(G,\vec{w})$ is the maximum 
$w(S)=\sum_{v \in S}w(v)$ taken over all independent sets $S$ of $G$.
A set of vertices of $G$ is independent if all the vertices in it are pairwise nonadjacent.
\end{definition}

\begin{definition}[Lov\'{a}sz number]
The Lov\'{a}sz number $\vartheta(G,\vec{w})$ of a vertex-weighted graph $(G,\vec{w})$ is the maximum of $\sum_i w_i |\langle v_i | \psi\rangle|^2$ over all unit vectors $| \psi\rangle$ and $| v_i\rangle$ such that $ \langle v_j | v_i\rangle=0$ whenever $i$ and $j$ are adjacent vertices of $G$.
\end{definition}

\begin{definition}[Egalitarian Lov\'{a}sz-optimum SI-C set] 
\label{def:elos}
An egalitarian SI-C set $\{\Pi_i\}$ is Lov\'{a}sz-optimum if, for any quantum state, the left-hand side of \eqref{eq:gnci} equals the Lov\'{a}sz number of the weighted graph $(G,\vec{w})$, denoted $\vartheta(G,\vec{w})$, where $\vec{w}$ is a set of weights for which the SI-C set is egalitarian. 
\end{definition}


\section{Tools used in the proof of Result~1}
\label{app:B}


\subsection{Proof of Lemma~1}


Let us denote by $a$ the maximum of the left-hand side of \eqref{eq:gnci} that is achievable by a noncontextual hidden-variable (NCHV) theory. Since set of correlations for NCHV theories forms a convex polytope, $a$ can be obtained with a deterministic assignment of outcomes to the observables. For a given deterministic assignment achieving $a$, if $\Pi_i = 1$ and $\Pi_j = 1$ for one $(i,j) \in E$, then let us consider the part in Eq.~\eqref{eq:gnci} which contains $\Pi_j=1$,
\begin{align}
&w_j P(\Pi_j = 1) - \sum_{k:(k,j)\in E} w_{kj} P(\Pi_k=1,\Pi_j=1) \nonumber \\
\le &w_j P(\Pi_j = 1) - w_{ij} P(\Pi_i=1,\Pi_j=1) \nonumber \\
= & w_j - w_{ij} \le 0.
\end{align}
This implies that, by setting $\Pi_j=1$ to be $0$ in the deterministic probability assignment, the value does not decrease. Hence, the maximal value $a$ can always be achieved by one deterministic probability assignment where $\Pi_i$ and $\Pi_j$ are not both assigned $1$ if $(i,j) \in E$. Therefore, $a$ can only be $\alpha(G,w)$.

In the case that $w_{ij} > \max \{w_i,w_j\} $, by setting $\Pi_j=1$ to be $0$ in the deterministic probability assignment, the value increases. Hence, the maximal value $a$ can never be achieved by the deterministic assignment where $\Pi_i=1$ and $\Pi_j=1$ for one $(i,j) \in E$.


\subsection{Common procedure in all the proofs of uniqueness up to unitary transformations}


Let us first consider the case in which the graph of compatibility $G$ is equal to the graph of orthogonality of the observables/projectors $\{\Pi_i\}_{i=1}^n$. That this is the case, can be experimentally tested as described in Lemma~2 in the main text. In some special cases, this can even be tested using the maximal quantum violation of the witness ${\cal W}$ of the form~\eqref{eq:gnci}.

\begin{theorem}\label{thm:bridge}
For any noncontextuality inequality of the form \eqref{eq:gnci}, if there exists a set of observables $\{\Pi_i\}$ such that, for any quantum state, the left-hand side of \eqref{eq:gnci} equals the maximum value attainable in quantum theory, denoted by $Q$, then 
\begin{equation}
	Q=\vartheta(G,\vec{w}),
\end{equation} 
where $\vartheta(G,\vec{w})$ is the Lov\'asz number of the graph $G$ with weights $\vec{w}$, and the observables $\{\Pi_i\}$ must be of the type represented by projectors $\{\Pi_i\}$ (here, we use the same symbol for the observable and the projector that represents it). Moreover, if $w_{ij} > \max \{w_i, w_j\}$, then, for any $(i,j) \in E$,
\begin{equation}
	\Pi_i\Pi_j = 0.
\end{equation}
\end{theorem}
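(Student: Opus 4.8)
The plan is to exploit the fact that state independence upgrades the scalar hypothesis into an operator identity. Writing the quantum value as $\tr(\rho\,\hat{\cal W})$, where $\hat{\cal W}=\sum_{i}w_i\Pi_i-\sum_{(i,j)\in E}w_{ij}\Pi_{ij}$ is the witness operator and $\Pi_{ij}$ is the effect associated with the joint outcome $(1,1)$ of the compatible pair $(i,j)$, the hypothesis that $\tr(\rho\,\hat{\cal W})=Q$ for \emph{every} $\rho$ forces $\hat{\cal W}=Q\,\id$, because a Hermitian operator whose expectation value is state independent (equal to $Q$ on every pure state) must be proportional to the identity. This single equation is the engine of the whole argument.

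First I would establish $Q=\vartheta(G,\vec w)$. The inequality $Q\le\vartheta(G,\vec w)$ is the quantum graph-theoretic bound on witnesses of the form \eqref{eq:gnci}: for any realization one has ${\cal W}\le\vartheta(G,\vec w)$, which follows from a sum-of-squares (SOS) certificate $\vartheta(G,\vec w)\,\id-\hat{\cal W}\succeq 0$ built from an optimal Lov\'asz orthonormal representation $\{\ket{v_i}\}$ of $(G,\vec w)$, generalizing the construction of \cite{cabello2016Simple}; the assumption $w_{ij}\ge\max(w_i,w_j)$ is precisely what makes the penalty terms large enough for this certificate to be positive semidefinite. Conversely, taking the rank-one projectors $\Pi_i=\ket{v_i}\bra{v_i}$ of that same representation gives $P_{ij}=0$ on all edges and $\sum_i w_i P_i=\vartheta(G,\vec w)$ at the optimal handle state, so the quantum maximum is attained and $Q\ge\vartheta(G,\vec w)$. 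Hence $Q=\vartheta(G,\vec w)$; note that this realization is not state independent, but it is only used to pin down the value of the global maximum.

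The structural conclusions then come from turning the SOS inequality into an equality. Because $\hat{\cal W}=Q\,\id=\vartheta(G,\vec w)\,\id$, the certificate collapses to $\sum_k L_k^\dagger L_k=\vartheta(G,\vec w)\,\id-\hat{\cal W}=0$, so every SOS term must vanish, $L_k=0$. Reading off these operator equations is what yields the claims: one family of terms forces $\Pi_i^2=\Pi_i$, i.e.\ each observable is represented by a projector (so that, in particular, the joint effect reduces to $\Pi_{ij}=\Pi_i\Pi_j$ for the commuting compatible pair), while a second family, whose strict positivity is governed by $w_{ij}-\max(w_i,w_j)$, forces $\Pi_i\Pi_j=0$ on exactly those edges where $w_{ij}>\max(w_i,w_j)$. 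This reproduces, at the operator level, the two-case dichotomy already visible in the NCHV analysis of Lemma~1.

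The part I expect to be the main obstacle is making the SOS certificate explicit while simultaneously handling the a priori possibility that the $\Pi_i$ are unsharp effects rather than projectors. Since the joint probability $P_{ij}$ must then be defined through a common refinement of the compatible pair rather than as $\tr(\rho\Pi_i\Pi_j)$, there is a mild chicken-and-egg issue: one would like to write the certificate in terms of $\Pi_i\Pi_j$, but that form is only justified once sharpness is known. The clean way around this is to run the SOS argument with the joint POVM elements directly and show that maximality forces sharpness and the identification $\Pi_{ij}=\Pi_i\Pi_j$ at the same time, after which the orthogonality conclusion for the strict-weight edges follows as above.
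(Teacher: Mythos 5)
Your proposal takes a genuinely different route from the paper, and its load-bearing step is a gap, not a technicality. Everything hinges on the claim that $\vartheta(G,\vec w)\,\id-\hat{\mathcal{W}}\succeq 0$ admits a sum-of-squares certificate valid for \emph{every} realization, i.e., that any projectors that merely commute along the edges of $G$ (without being orthogonal there) satisfy $\mathcal{W}\le\vartheta(G,\vec w)$. You never construct this certificate, and it is not a routine generalization of \cite{cabello2016Simple}: the known general bound for non-orthogonal realizations of a witness of the form \eqref{eq:gnci} is Eq.~\eqref{qth}, $Q\le\vartheta(G',\vec w')-\sum_{(i,j)\in E}w_{ij}$, which is computed on the \emph{exclusivity graph $G'$ of events} (including the edge events $(\Pi_i=0,\Pi_j=0)$, etc.), not on $G$. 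Whether that quantity equals $\vartheta(G,\vec w)$ must be checked case by case; the paper verifies it computationally for BBC-21, CEG-18, and Peres-24, and for YO-13 it cannot be obtained with $w_{ij}\ge\max\{w_i,w_j\}$ at all --- one needs $w_{ij}\ge 2\max\{w_i,w_j\}$ plus the extra constraints \eqref{eq:yoextra}. So a universal certificate pegged to $\vartheta(G,\vec w)$ under only $w_{ij}\ge\max\{w_i,w_j\}$ is exactly what is \emph{not} available, and your argument is circular in its ordering: the collapse of the certificate is supposed to deliver edge-orthogonality, yet writing and proving positivity of such a certificate is only feasible once the non-orthogonal case is already under control. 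The ``chicken-and-egg'' issue you flag at the end is in fact the whole theorem, and your correct observation that state-independence forces $\hat{\mathcal{W}}=Q\,\id$ does not resolve it.

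The paper's proof runs in the opposite order and needs no certificate. Orthogonality comes first: for an edge $(i,j)$ with $w_{ij}>\max\{w_i,w_j\}$ and $\Pi_i\Pi_j\neq 0$, evaluate the witness on $\rho_i=\Pi_i/\tr(\Pi_i)$; then $P_{\rho_i}(\Pi_i=1,\Pi_j=1)=P_{\rho_i}(\Pi_j=1)>0$, so all terms containing $\Pi_j$ sum to at most $(w_j-w_{ij})P_{\rho_i}(\Pi_j=1)<0$, and replacing $\Pi_j$ by the zero projector yields a valid realization whose value at $\rho_i$ strictly exceeds $Q$ --- contradicting that $Q$ is the quantum maximum. (Both hypotheses, state-independence and maximality, enter here, and only scalar evaluations at specific states are used, never an operator identity.) Only after $\Pi_i\Pi_j=0$ is established does the Lov\'asz bound apply: the witness reduces to $\sum_i w_iP_i$, which for an \emph{orthogonal} projector representation is bounded by $\vartheta(G,\vec w)$; combined with $Q\ge\vartheta(G,\vec w)$ from a Lov\'asz-optimal representation evaluated on its handle state (the one step you share with the paper), this gives $Q=\vartheta(G,\vec w)$. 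To repair your write-up you would either have to actually construct the universal certificate (unlikely to exist, per the above), or restrict attention to the hypothesized state-independent realization --- at which point the paper's perturbation argument is what does the work and the SOS framing adds nothing.
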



\begin{proof} 
Let us first consider the case in which $w_{ij} > \max \{w_i, w_j\}$.
Consider the state $\rho_i = \Pi_i /\tr(\Pi_i)$ associated to $\Pi_i$, with $i \in V$. For this state,
\begin{equation}
	P_{\rho_i}(\Pi_i=1) = 1,
\end{equation}
and, for any $j$ such that $(i,j) \in E$,
\begin{equation}
	P_{\rho_i}(\Pi_i=1,\Pi_j=1)= P_{\rho_i}(\Pi_j=1).
\end{equation}
If there is $\Pi_j$ such that $(i,j) \in E$ and $\Pi_i\Pi_j \neq 0$, then the terms in the left-hand side of \eqref{eq:gnci} that contain $\Pi_j$ satisfy
\begin{align}
	& w_j P_{\rho_i}(\Pi_j = 1) - \sum_{k : (k,j) \in E} w_{kj} P_{\rho_i}(\Pi_k=1,\Pi_j=1) \nonumber \\ 
	\le & w_j P_{\rho_i}(\Pi_j=1) - w_{ij} P_{\rho_i}(\Pi_i=1,\Pi_j=1) \nonumber \\ 
	= & (w_j-w_{ij}) P_{\rho_i}(\Pi_j=1)
	< 0.
\end{align}
Hence, by setting $\Pi_j = 0$, the quantum value of the left-hand side of \eqref{eq:gnci} for state $\rho_i$ increases, which contradicts the assumption that $Q$ is the maximum quantum value. Therefore, we can conclude that, for all $(i,j)\in E$,
\begin{equation}\label{eq:exclusivity}
	\Pi_i\Pi_j = 0.
\end{equation}
Under this condition, for a given state $\rho$, the left-hand side of~\eqref{eq:gnci} equals to $\sum_i w_i p(\Pi_i=1)_\rho$, which is upper bounded by $\vartheta(G,\vec{w})$. Therefore, $q\le \vartheta(G,\vec{w})$ (see Sec.~\ref{app:A}). On the other hand, by the definition of $\vartheta(G,\vec{w})$, there is always $\{\Pi_i\}_{i \in V}$ such that the quantum value is $\vartheta(G,\vec{w})$, which implies $q \ge \vartheta(G,\vec{w})$. Therefore, we can conclude that $q=\vartheta(G,\vec{w})$.

Similarly, $Q=\vartheta(G,\vec{w})$ also holds in the case that $w_{ij} = \max\{w_i, w_j\}$. However, in this case, $\Pi_i \Pi_j = 0$ does not need to hold.
\end{proof}

The requirement of a full-rank state in our main result's proof stems from the fact that if a full-rank state attains the maximal value $Q$, then any state achieves $Q$. Consequently, as per the above theorem, the projectors will satisfy the orthogonality relation according to $G$. To be precise, Eq.~\eqref{eq:exclusivity} implies that the compatibility graph $G$ represents also the orthogonality relations between the observables $\{\Pi_i\}$. This step is crucial in the proof of Result~1. 


In general, $Q$ may be difficult to determine. Then, one cannot decide whether or not $Q$ is achievable for all the states. However \cite{CabelloPRA2016}, for $w_{ij} > \max \{w_i, w_j\}$,
\begin{equation}
\label{qth}
Q \le \vartheta(G',\vec{w}') - \sum_{(i,j) \in E} w_{ij},
\end{equation}
where $(G',\vec{w}')$ is the (weighted) graph of exclusivity of the weighted events $\{w_i(\Pi_i=1)\}_{i \in V}$ and $\{w_{ij}(\Pi_i=0,\Pi_j=0), w_{ij}(\Pi_i=0,\Pi_j=1), w_{ij}(\Pi_i=1,\Pi_j=0)\}_{(i,j) \in E}$. Therefore, if for any quantum state,
the left-hand side of \eqref{eq:gnci} equals the right-hand side of \eqref{qth}, we can conclude that $Q = \vartheta(G,\vec{w})$ and that $\Pi_i \Pi_j=0$ for any $(i,j) \in E$. 
In fact, this is the case for the optimal vertex-weighted graphs of compatibility of BBC-21, CEG-18, and Peres-24. For the one of YO-13, we need $w_{ij} \ge 2\max\{w_i, w_j\}$ in the form~\eqref{eq:gnci} and the extra normalization conditions in Eq.~\eqref{eq:yoextra}. The computations needed for checking Peres-39 and beyond it cannot be carried out with a laptop computer.

The proof of uniqueness up to unitary transformations of a given SI-C set $\{\Pi_i\}_{i=1}^n$ is then based on two facts:
\begin{enumerate}
\item All the projectors in $\{\Pi_i\}$ have the same rank, that we call $\kappa$.
\item If $(i,j)\not\in E$ in the graph of orthogonality $G$, then $\rank(\Pi_j \Pi_i \Pi_j) = \rank(\Pi_i) = \rank(\Pi_j)$.
\end{enumerate}
The first fact is ensured by the second one if the complement graph of $G$ is connected, which is indeed true for all the cases considered here.
The second fact holds also for all the cases considered here. This can be verified by semi-definite programming (SDP).

To be more explicit, if there is $i$ and $j$ such that $(i,j)\not\in E$ and $\rank(\Pi_j \Pi_i \Pi_j) < \rank(\Pi_j)$, then there should exist a state $|s\rangle$ such that $\langle s|\Pi_j|s\rangle = 1$ and $\langle s|\Pi_i|s\rangle = 0$. 
Notice that, $\{\Pi_i\}$ is a SI-C set with some weights $w$ and quantum violation $Q$. 
Denote $T_{ij} = \langle s|\Pi_{i-1} \Pi_{j-1} |s\rangle$, where $\Pi_0=\id$. Then, the matrix $T$ satisfies the following conditions:
\begin{align}\label{eq:sdp_relaxation}
&T \succeq 0, \nonumber\\
&T_{11} = 1, T_{1k} = T_{kk}, \forall k, \nonumber\\
&T_{kl} = 0, \text{ if } (k,l)\in E, \nonumber\\
&T_{1j} = 1, T_{1i} = 0, \sum_{k\ge 2} w_{k-1} T_{1k} = Q.
\end{align}
For YO-13, we have to add some extra linear conditions [see Eq.~\eqref{eq:yoextra}].
To check whether or not condition 2 holds, we can check whether or not the SDP in Eq.~\eqref{eq:sdp_relaxation} is feasible, which is a relaxation of the original problem.
In all the cases considered here, the relaxation in Eq.~\eqref{eq:sdp_relaxation} is not feasible. This implies that $\rank(\Pi_j \Pi_i \Pi_j) < \rank(\Pi_j)$ cannot be true for $(i,j)\not\in E$. Therefore, the second fact is also ensured.

Then, we can choose a complete basis as the computational basis such that $\{\Pi_i\}_{i=1}^c$ are the projectors into the subspace occupying dimensions from $(c-1)\kappa+1$ to $c\kappa$. 
Notice that it is always possible to choose a complete basis for the SI-C sets considered here.
For any other projector $\Pi_i$, we have 
\begin{equation}\label{eq:dec1}
\Pi_i = {L'}_i^{\dagger} L'_i, \ L'_i = [B_{1i}, B_{2i}, \ldots, B_{ci}].
\end{equation}
The reason is as follows. Since $\Pi_i$ is a projector of rank $\kappa$, it can be written as $\Pi_i = \sum_{t=1}^{\kappa} |v_i\rangle\langle v_i|$, where the $|v_i\rangle$'s are orthogonal normalized vectors. Denote $L_i'=[|v_1\rangle,\ldots,|v_\kappa\rangle]^\dagger$, we have $\Pi_i = {L'}_i^\dagger {L'}_i$. Here, the dimension of $L_i'$ is $\kappa\times d$, where $d$ is the dimension and $d=c\kappa$. Hence, we can always write $L_i'$ as $[B_{1i}, B_{2i},\ldots, B_{ci}]$, where $B_{ti}$ is a $\kappa\times\kappa$ matrix for any $t$. Then, for $t=1,\ldots,c$, we have $\Pi_t \Pi_i \Pi_t = B_{ti}^\dagger B_{ti}$. Since $\Pi_t \Pi_i \Pi_t$ is of rank $\kappa$ when $(t,i)\not\in E$, we have $B_{ti}$ to be invertible in this case. In the case that $(t,i)\in E$, we have $\Pi_t\Pi_i\Pi_t = 0$. Consequently, in this case, $B_{ti}=0$.

By definition of $L_i'$, $L_i' {L'}_i^\dagger = \id_{\kappa}$.
If $B_{1i}$ is invertible, then we introduce 
$L_i := B_{1i}^{-1}L'_i = [\id, B_{1i}^{-1}B_{2i}, \ldots, B_{1i}^{-1}B_{ci}].$
Further, we can verify that
\begin{equation}\label{lem1fact3}
L_i^\dagger (L_i L_i^\dagger)^{-1} L_i = {L'}_i^\dagger ({L'}_i {L'}_i^\dagger)^{-1} {L'}_i = \Pi_i.
\end{equation}
In addition, 
\begin{equation}\label{lem1fact32}
L_iL_j^{\dagger} = 0 \Leftrightarrow L'_i L'_j{^{\dagger}} = 0 \Leftrightarrow \Pi_i \Pi_j = 0.
\end{equation}
Hence, in the proofs of uniqueness, we will adopt $\{L_i\}$ for convenience. From the uniqueness of $\{L_i\}$, we can recover the uniqueness of $\{\Pi_i\}$. For completeness, a similar result is proven when point (III) of Lemma~\ref{lemma:rank} is proven.


\subsection{Proof that BBC-21 is unique up to unitary transformations}


The uncharacterized projectors in this case are $\{\Pi_i\}$ that are defined by the $\{L_i\}$ operators in Eq.~\eqref{lem1fact3}.
For convenience, we relabel $\{L_i\}$ by following the same order of the vectors $\{v_i\}_{i=1}^{21}$ in Table~\ref{tab:bbc21} as follows:
\begin{equation} 
\begin{split}
	& y^0_1, y^1_1, y^2_1, y^0_2, y^1_2, y^2_2, y^0_3, y^1_3, y^2_3, z_1, z_2, z_3, \\
	& h^0_1, h^1_1, h^2_1, h^0_2, h^1_2, h^2_2, h^0_3, h^1_3, h^2_3.
\end{split}
\end{equation}
For example, $\Pi_{10} = z_1^\dagger (z_1 z_1^\dagger)^{-1} z_1$, where $z_1 = L_{10}$. These operators satisfy the additional conditions in Eq.~\eqref{lem1fact32} due to the orthogonality relations of the projectors $\{\Pi_i\}$.

Without loss of generality, we assume that $\{z_1,z_2,z_3\}$ forms a complete basis. Moreover, the fact that $z_1 \perp y_1^0$, $z_2 \perp y_2^0$, $z_1 \perp y_1^1$, we can take
\begin{equation}\label{eq:wlog} 
\begin{split}
	& z_1 = [\id,0,0],\;\;\; z_2 = [0,\id,0],\;\;\; z_3 = [0,0,\id],\\
	& y^0_1 = [0,\id,A],\;\;\; y^0_2 = [\id,0,B],\;\;\; y^1_1 = [0,\id,C],
\end{split}
\end{equation}
where $A$, $B$, and $C$ are matrix variables to be determined.
The reason is the following. The vertices related to ${z_1,z_2,z_3}$ form a clique of size $3$, and the corresponding three projectors sum up to identity. Hence, as argued around Eq.~\eqref{eq:dec1}, we could have the first line of Eq.~\eqref{eq:wlog}. 
As argued around Eq.~\eqref{lem1fact3} and Eq.~\eqref{lem1fact32}, we could have the second line of Eq.~\eqref{eq:wlog}. Notice that the zero matrix in $y_1^0$ follows from the fact that the projector related to $y_1^0$ is orthogonal to the one 
related to $z_1$. Similarly for the others.
The same reasoning is used in the proofs for the other SI-C sets.

Since $h^0_1 \perp y^0_1$ and $h^0_1 \perp y^0_2$,
\begin{equation}
h^0_1 = [-B^\dagger, -A^\dagger, \id].
\end{equation}
The reason is the following. Denote $h_1^0 = [M_1, M_2, M_3]$, where $M_i$'s are invertible since the corresponding projector is not orthogonal to any projectors corresponding to the $z_i$'s. From the discussion around Eqs.~\eqref{lem1fact3} and \eqref{lem1fact32}, we can set $M_3 = \id$.
From the orthogonality relation between projectors, we know that $h_1^0 \perp y_1^0$. Thus, $0 = h_1^0 {y_1^0}^\dagger = M_2+A^\dagger$, which implies that $M_2 = -A^\dagger$. Similarly, the orthogonality relation $h_1^0 \perp y_2^0$ leads to $M_1 = -B^\dagger$.

Similarly,
$ y^0_3 \perp z_3$ and $y^0_3 \perp h^0_1 $ implies 
\begin{equation}
y^0_3 = [-AB^{-1},\id,0],
\end{equation}
$ h^2_3 \perp y^1_1$ and $h^2_3 \perp y^0_2 $ implies 
\begin{equation}
h^2_3 = [-B^\dagger,-C^\dagger,\id],
\end{equation}
$ y^2_3 \perp h^2_3$ and $y^2_3 \perp z_3 $ implies 
\begin{equation}
y^2_3 = [-CB^{-1},\id,0],
\end{equation}
$ h^0_2 \perp y^1_1$ and $h^0_2 \perp y^0_3 $ implies 
\begin{equation}
h^0_2 = [(BA^{-1})^\dagger,\id,-(C^{-1})^\dagger],
\end{equation}
$ y^1_2 \perp h^0_2$ and $y^1_2 \perp z_2 $ implies 
\begin{equation}
y^1_2 = [AB^{-1},0,C],
\end{equation}
$ h^1_3 \perp y^0_1$ and $h^1_3 \perp y^1_2 $ implies 
\begin{equation}
h^1_3 = [-(BA^{-1}C)^\dagger,-A^\dagger,\id],
\end{equation}
$ y^1_3 \perp h^1_3$ and $y^1_3 \perp y^1_2 $ implies 
\begin{equation}
y^1_3 = [-C^{-1}AB^{-1},A^{-1},\id],
\end{equation}
$ h^1_1 \perp y^1_1$ and $h^1_1 \perp y^1_3 $ implies 
\begin{equation}
h^1_1 = [(BA^{-1}CA^{-1})^\dagger,\id,-(C^{-1})^\dagger],
\end{equation}
$ y^2_2 \perp h^1_1$ and $y^2_2 \perp z_2 $ implies 
\begin{equation}
y^2_2 = [C^{-1}AC^{-1}AB^{-1},0,\id],
\end{equation}
$ h^1_2 \perp y^0_2$ and $h^1_2 \perp y^1_3 $ implies 
\begin{equation}
h^1_2 = [\id,(AC^{-1}AB^{-1})^\dagger,-(B^{-1})^\dagger],
\end{equation}
$ y^2_1 \perp h^1_2$ and $y^2_1 \perp z_1 $ implies 
\begin{equation}
y^2_1 = [0,\id,AC^{-1}A],
\end{equation}
$ h^0_3 \perp y^0_3$ and $h^0_3 \perp y^2_2 $ implies 
\begin{equation}
h^0_3 = [\id,(AB^{-1})^\dagger,-(C^{-1}AC^{-1}AB^{-1})^\dagger],
\end{equation}
$ h^2_1 \perp y^1_2$ and $h^2_1 \perp y^2_1 $ implies 
\begin{equation}
h^2_1 = [-(BA^{-1}C)^\dagger, -(AC^{-1}A)^\dagger,\id],
\end{equation}
$ h^2_2 \perp y^0_1$ and $h^2_2 \perp y^2_3 $ implies 
\begin{equation}
h^2_2 = [(BC^{-1})^\dagger,\id,-(A^{-1})^\dagger].
\end{equation}
Then, $h^2_1 \perp y^2_3$ implies that 
\begin{equation}
(AC^{-1})^3 = \id. 
\end{equation}
In addition,
$ h^0_1 \perp h^1_1 $ implies 
\begin{equation}
B^\dagger B A^{-1}CA^{-1} + A^\dagger + C^{-1} = 0,
\end{equation}
$ h^0_1 \perp h^2_1 $ implies 
\begin{equation}
B^\dagger BA^{-1}C + A^\dagger AC^{-1}A + \id = 0,
\end{equation}
$ h^1_1 \perp h^2_1 $ implies 
\begin{equation}
(BA^{-1}CA^{-1})^\dagger (BA^{-1}C) + AC^{-1}A + (C^{-1})^\dagger = 0.
\end{equation}

By making use $(AC^{-1})^3 = \id$, we obtain
\begin{align}\label{eq:bbcbb}
& B^\dagger B + A^\dagger C + C^{-1}A = 0\\
& B^\dagger B + A^\dagger A C^{-1}A + A^{-1}C = 0,\\
& C^\dagger + A^{-1} = C^\dagger C A^{-1} + C^{-1}.
\end{align}
Similarly, $h^0_2 \perp h^1_2$ implies
\begin{align}
B^\dagger B + A^\dagger AC^{-1}A + (C^{-1}A)^\dagger = 0,
\end{align}
$h^1_2 \perp h^2_2$ implies 
\begin{align}
B^\dagger B + (C^\dagger A C^{-1}A)^\dagger + A^{-1}C = 0.
\end{align}
Hence, we have
\begin{align}\label{eq:bbcconds}
& A^\dagger A = C^\dagger C, A A^\dagger = C C^\dagger,\nonumber\\
& C^\dagger + A^{-1} = A^\dagger + C^{-1}.
\end{align}
Since we still have the freedom to choose the basis for the subspaces related to $z_2$ and $z_3$, we can assume that $A$ is diagonal and non-negative. Since $A$ is invertible, all the diagonal items are positive. We claim that $A = \id$, otherwise, without loss of generality, denote
\begin{equation}
A = \begin{bmatrix} X&0\\0&\id \end{bmatrix},
\end{equation}
where $X$ is a diagonal matrix whose diagonal terms are positive and different than $1$. 

By solving Eq.~\eqref{eq:bbcconds}, we obtain that
\begin{equation}
C = \begin{bmatrix} X & 0\\ 0& C_{22} \end{bmatrix},
\end{equation}
where $C_{22}$ is invertible. 
Hence, there is a state $\ket{s}$ in the subspace corresponding to the block $X$ such that
\begin{equation}
A\ket{s} = C \ket{s} = C^\dagger \ket{s} = x \ket{s},
\end{equation}
where $x>0$. Therefore,
\begin{equation}
\bra{s} (B^\dagger B + A^\dagger C + C^{-1}A) \ket{s} = \bra{s} B^\dagger B \ket{s} + (x^2+1) > 0,
\end{equation}
which contradicts Eq.~\eqref{eq:bbcbb}. Hence, 
\begin{align}\label{eq:bbccs}
& A = \id,\;\;\; CC^\dagger = \id,\;\;\; C^3 = \id,\nonumber\\
& B^\dagger B + C + C^{-1} = 0.
\end{align}
Note that, if we rotate the basis of the subspaces span by $z_2$ and $z_3$ with the same unitary $U$, this does not affect $A = \id$ and $C$ is changed to $UCU^\dagger$. By choosing a suitable $U$, $UCU^\dagger$ is a diagonal matrix according to the spectral theorem of norm matrix. Therefore, we can assume that $C$ is diagonal. 
Notice that $\id - C^3 = (\id -C)(\id + C + C^2) = 0$, then we have $\id + C + C^2 = 0$, otherwise $\id = C$, which leads to the contradiction that $BB^\dagger = -2\id$. 
Consequently, all the diagonal terms in $C$ are $e^{\pm 2\pi \mathbbm{i}/ 3}$ and $B^\dagger B = \id$. That is, $B$ is unitary. Since we still have the freedom to choose the basis of the subspace spanned by $z_1$, we can assume that $B=\id$.

Although we cannot change all the diagonal terms $e^{\pm 2\pi \mathbbm{i}/ 3}$ in $C$ to be $e^{2\pi \mathbbm{i} /3}$ with unitary, it can be done with the time-reversal operator in some dimensions which changes $\mathbbm{i}$ into $-\mathbbm{i}$. The time-reversal operator is also isometric. Therefore, up to isometry, $\Pi_i = \ket{v_i}\bra{v_i} \otimes \id$, where $|v_i\rangle$ is the normalized vector of the $i$-th column, $\bar{x}=-x$, $q = e^{2\pi \mathbbm{i}/3}$, and $g=q^2$.


\subsection{CEG-18 and its SI-C witness}


CEG-18 is the set of $18$ rank-one projectors in $d=4$ shown in Table~\ref{tab:ceg18}. 
Its graph of compatibility of CEG-18 is depicted in Fig.~\ref{fig:ceg18}.
CEG-18 was introduced in \cite{Cabello:1996PLA} and is an egalitarian Lov\'asz-optimum SI-C set and a critical KS set. It can be proven that CEG-18 is the 
KS set of rank-one projectors with the smallest possible cardinality (in any dimension!) \cite{Xu:2020PRL}.


\begin{widetext}
\begin{table*}[t!]
	\begin{tabular}{CCCCCCCCCCCCCCCCCCC}
		\hline\hline
		& v_1 & v_2 & v_3 & v_4 & v_5 & v_6 & v_7 & v_8 & v_9 & v_{A} & v_{B} & v_{C} & v_{D} & v_{E} & v_{F} & v_{G} & v_{H} & v_{I} \\
		\hline
		\rule{0pt}{3.2mm}
		v_{i1}\;\; & 1 & 0 & 0 & 0 & 1 & 1 & \bar{1} & 1 & t & 0 & t & \bar{1} & 1 & 1 & t & 0 & 0 & 0 \\
		v_{i2}\;\; & 0 & 1 & 0 & 0 & 1 & \bar{1} & 1 & 1 & 0 & t & 0 & 1 & 1 & \bar{1} & 0 & t & t & 0 \\
		v_{i3}\;\; & 0 & 0 & 1 & 0 & 0 & t & t & 0 & 1 & \bar{1} & \bar{1} & 0 & t & 0 & \bar{1} & \bar{1} & 1 & 1 \\
		v_{i4}\;\; & 0 & 0 & 0 & 1 & 0 & 0 & 0 & t & \bar{1} & \bar{1} & 1 & t & 0 & t & \bar{1} & 1 & \bar{1} & 1 \\ 
		\hline
		w_i\;\; & 1 & 1 & 1 & 1 & 1 & 1 & 1 & 1 & 1 & 1 & 1 & 1 & 1 & 1 & 1 & 1 & 1 & 1 \\
		\hline\hline
	\end{tabular}
	\caption{\label{tab:ceg18}{\bf CEG-18.} Each column $v_i$ corresponds to one observable represented by the projector $|v_i\rangle \langle v_i|$. The labels correspond to those in with Fig.~\ref{fig:ceg18}. The rows $v_{ij}$ give the components of $|v_i\rangle$ (unnormalized). $\bar{1}=-1$ and $t=\sqrt{2}$. The last row contains the weights of the optimal SI-C witness of the form~\eqref{eq:gnci}. The weights $w_{ij}$ in~\eqref{eq:gnci} can be chosen in any way that satisfies $w_{ij} \geq \max\{w_i, w_j\}$. With these weights, $\alpha(G,\vec{w})=4$ and $Q(G,\vec{w})=\vartheta(G,\vec{w})=\frac{9}{2}$.
	}
\end{table*}
\end{widetext}


\begin{figure}[t!]
\centering \includegraphics[width=0.6\textwidth]{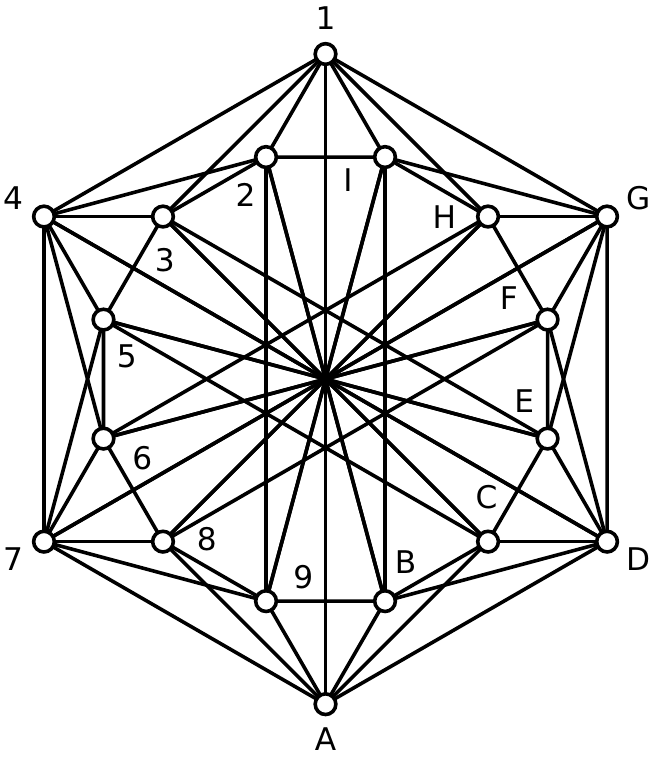}
\caption{\label{fig:ceg18} Graph of compatibility of CEG-18. Nodes represent observables and edges connect compatible observables. The labels refer to the observables in Table~\ref{tab:ceg18}.}
\end{figure}


The last row of Table~\ref{tab:ceg18} provides the weights corresponding to the optimal SI-C witness of the form~\eqref{eq:gnci}.


\subsection{Proof that CEG-18 is unique up to unitary transformations}


Suppose that you have the $L_1,\ldots,L_{18}$ with the same relations of orthogonality as the vectors $v_1,\ldots,v_9,v_A,\ldots,v_I$ in Table~\ref{tab:ceg18} and Fig.~\ref{fig:ceg18}.
Without loss of generality, we can assume that
\begin{equation} 
\begin{split}
	& L_1 = [\id,0,0,0],\;\;\;\;L_2 = [0,\id,0,0],\\ 
	& L_3 = [0,0,\id,0],\;\;\;\;L_4 = [0,0,0,\id].
\end{split}
\end{equation} 

Without loss of generality, we can assume that
\begin{equation}
L_5 = [\id, A, 0, 0].
\end{equation}
Then, $L_6\perp L_4$ and $L_6\perp L_5$ imply 
\begin{align}
L_6 = [\id, -(A^{-1})^\dagger,B,0].
\end{align}
Similarly, $L_7\perp L_4$ and $L_7\perp L_5$ imply
\begin{align}
L_7 = [\id, -(A^{-1})^\dagger,C,0].
\end{align}
Since $L_7 \perp L_6$, 
\begin{equation}
C = - [\id+(AA^\dagger)^{-1}] (B^{-1})^\dagger.
\end{equation}
From $L_{18} \perp L_1$ and $L_{18}\perp L_2$, we can assume
\begin{equation}
L_{18} = [0,0,\id,D].
\end{equation}
Since $L_{17} \perp L_1$ and $L_{17} \perp L_{18}$,
\begin{align}
L_{17} = [0, E, \id, -(D^{-1})^\dagger].
\end{align}
In addition, $L_{16}\perp L_1$ and $L_{16}\perp L_{18}$ imply 
\begin{align}
L_{16} = [0,F,\id, -(D^{-1})^\dagger].
\end{align}
Since $L_{17} \perp L_{16}$ and $L_{16}\perp L_7$, 
\begin{equation}
F = [\id+(D D^\dagger)^{-1}] (E^{-1})^\dagger = -B^{-1}(A+(A^{-1})^\dagger).
\end{equation}
Hence
\begin{equation}\label{eq:uniD}
D D^\dagger = (B^{-1}AA^\dagger B)^{-1}.
\end{equation}
The relation $L_{17}\perp L_6$ implies
\begin{equation}
E = B^\dagger A.
\end{equation}
In addition, $L_{15}\perp L_2$, $L_{15}\perp L_{16}$, and $L_{15} \perp L_{17}$ imply 
\begin{align}
L_{15} = [\id,0,-(B^{-1})^\dagger,-(B^{-1})^\dagger D],
\end{align}
$L_8\perp L_3$, $L_8\perp L_6$, and $L_8\perp L_{15}$ imply
\begin{align}
L_8 = [\id,A,0,B(D^{-1})^\dagger],
\end{align}
$L_{14}\perp L_3$, $L_{14}\perp L_5$, and $L_{14}\perp L_{16}$ imply
\begin{align}
L_{14} = [\id,-(A^{-1})^\dagger,0,-(A^{-1})^\dagger F^\dagger D),
\end{align}
$L_{13}\perp L_4$, $L_{13}\perp L_{14}$, and $L_{13}\perp L_{15}$ imply
\begin{align}
L_{13} = [\id,A,B,0],
\end{align}
$L_9\perp L_2$, $L_9\perp L_7$, and $L_9\perp L_{18}$ imply
\begin{align}
L_9 = [\id,0,-(C^{-1})^\dagger, (D^{-1}C^{-1})^\dagger],
\end{align}
$L_{10}\perp L_1$, $L_{10}\perp L_7$, and $L_{10} \perp L_8$ imply
\begin{align}
L_{10} = [0,\id,(C^\dagger A)^{-1}, -A(B^\dagger)^{-1}D].
\end{align}
Since $L_9 \perp L_8$, we have
\begin{equation}
C = -B(D D^\dagger)^{-1} = -AA^\dagger B.
\end{equation}
From $L_{10} \perp L_9$ and $ L_{10}\perp L_{13}$, 
\begin{equation}
C = - (A^\dagger)^{-2}B,\;\;\;\;A = A^\dagger.
\end{equation}
Therefore, we obtain that $A$ is hermitian and $A^4 = \id$. Hence, the eigenvalues of $A$ can only be $\pm 1$ and $A$ is automatically unitary. We still have some freedom to choose different $A$, $B$, and $D$ by applying a global unitary. We can choose $A=\id$. Eq.~\eqref{eq:uniD} implies that $D$ is also unitary, we can also choose it to be $\id$.

Then, by $L_{16}\perp L_{13}$, we obtain that $BB^\dagger = 2\id$. Hence, we can similarly set $B = \sqrt{2}\id$. Consequently,
\begin{equation}
C = -\sqrt{2}\id,\;\;\;E = \sqrt{2}\id,\;\;\; F = -\sqrt{2}\id.
\end{equation}
$L_{12}\perp L_{3}$, $L_{12}\perp L_5$, and $L_{12}\perp L_{14}$ imply
\begin{align}
L_{12} = [\id,-\id,0,-\sqrt{2}\id].
\end{align}
$L_{11}\perp L_{10}$, $L_{11}\perp L_{12}$, and $L_{11}\perp L_{13}$ imply
\begin{align}
L_{11} = [\sqrt{2}\id,0,-\id,\id].
\end{align}
Therefore, up to a global unitary, 
\begin{equation}
\Pi_i = \ket{v_i}\bra{v_i} \otimes \id, \forall i=1,\ldots,18,
\end{equation}
where $|v_i\rangle$'s are the normalized columns in Table~\ref{tab:ceg18}.


\begin{widetext}
\begin{table*}[t!]
	\begin{tabular}{CCCCCCCCCCCCCCCCCCCCCCCCC}
		\hline\hline 
		& v_1 & v_2 & v_3 & v_4 & v_5 & v_6 & v_7 & v_8 & v_9 & v_{10} & v_{11} & v_{12} & v_{13} & v_{14} & v_{15} & v_{16} & v_{17} & v_{18} & v_{19} & v_{20} & v_{21} & v_{22} & v_{23} & v_{24} \\
		\hline \rule{0pt}{3.2mm}
		v_{i1}\;\; & 1 & 0 & 0 & 0 & 1 & 1 & 1 & 1 & 1 & 1 & 1 & \bar{1} & 1 & 1 & 0 & 0 & 1 & 1 & 0 & 0 & 1 & 1 & 0 & 0 \\
		v_{i2}\;\; & 0 & 1 & 0 & 0 & 1 & 1 & \bar{1} & \bar{1} & 1 & 1 & \bar{1} & 1 & 1 & \bar{1} & 0 & 0 & 0 & 0 & 1 & 1 & 0 & 0 & 1 & 1 \\
		v_{i3}\;\; & 0 & 0 & 1 & 0 & 1 & \bar{1} & 1 & \bar{1} & 1 & \bar{1} & 1 & 1 & 0 & 0 & 1 & 1 & 1 & \bar{1} & 0 & 0 & 0 & 0 & 1 & \bar{1} \\
		v_{i4}\;\; & 0 & 0 & 0 & 1 & 1 & \bar{1} & \bar{1} & 1 & \bar{1} & 1 & 1 & 1 & 0 & 0 & 1 & \bar{1} & 0 & 0 & 1 & \bar{1} & 1 & \bar{1} & 0 & 0 \\
		\hline
		w_i\;\; & 1 & 1 & 1 & 1 & 1 & 1 & 1 & 1 & 1 & 1 & 1 & 1 & 1 & 1 & 1 & 1 & 1 & 1 & 1 & 1 & 1 & 1 & 1 & 1 \\ \hline\hline
	\end{tabular}
	\caption{\label{tab:peres24} {\bf Peres-24.} Each column $v_i$ corresponds to one observable represented by the projector $|v_i\rangle \langle v_i|$. The rows $v_{ij}$ give the components of $|v_i\rangle$ (unnormalized). $\bar{1}=-1$. The last row contains the weights $w_i$ of the optimal SI-C witness of the form~\eqref{eq:gnci}. The weights $w_{ij}$ in~\eqref{eq:gnci} can be chosen in any way that satisfies $w_{ij} \geq \max\{w_i, w_j\}$.
		With these weights, $\alpha(G,\vec{w})=5$ and $Q(G,\vec{w})=\vartheta(G,\vec{w})=6$.
	}
\end{table*}
\end{widetext}


\subsection{Peres-24 and its SI-C witness}


Peres-24 is the set of $24$ rank-one projectors in $d=4$ shown in Table~\ref{tab:peres24}. It was introduced in \cite{Peres:1991JPA}. Unlike BBC-21 and CEG-18, Peres-24 is not critical (in the sense of Zimba and Penrose \cite{Zimba:1993SHPS}): some observables can be removed while still having a SI-C set. In turn, Peres-24 is a complete SI-C set.

The last row of Table~\ref{tab:peres24} provides the weights corresponding to the optimal SI-C witness of the form~\eqref{eq:gnci}. As shown in Table~\ref{tab:peres24}, Peres-24 is an egalitarian Lov\'{a}sz-optimum SI-C set.


\subsection{Proof that Peres-24 is unique up to unitary transformations}


Suppose that you have the projectors $\Pi_1, \ldots, \Pi_{24}$, and, correspondingly, $L_1,\ldots,L_{24}$ with the same relations of orthogonality as the vectors $v_1,\ldots,v_{24}$ in Table~\ref{tab:peres24}.
Without loss of generality, we can take $L_1 = [\id_d, 0, 0, 0]$, $L_2 = [0,\id_d,0,0]$, $L_3 = [0,0,\id_d,0]$, and $L_4 = [0,0,0,\id_d]$, where $\id_d$ is the identity operator in the Hilbert space of dimension $d$. Hereafter, for simplicity, we will omit the subindex $d$, and simply write $\id$. 

Let us now consider the projectors $L_{13}$, $L_{14}$, $L_{15}$, and $L_{16}$.
Since $L_{13} \perp L_{3}$, $L_{13} \perp L_4$, $L_{15} \perp L_{1}$, and $L_{15}\perp L_2$, by using Eq.~\eqref{lem1fact3}, we can assume that
\begin{equation}\label{eq:peresd}
L_{13} = [\id, D, 0, 0],\;\;\;\; L_{15} = [0,0,\id,E],
\end{equation}
where $D$ and $E$ are invertible matrices. Then, due to Eq.~\eqref{lem1fact32}, $L_{14} \perp L_3$, $L_{14} \perp L_4$, and $L_{14} \perp L_{13}$ imply
\begin{equation}
L_{14} = [\id,-(D^{-1})^\dagger,0,0].
\end{equation}
Similarly, $L_{16} \perp L_1$, $L_{16} \perp L_2$, and $L_{16} \perp L_{15}$ imply
\begin{equation}
L_{16} = [0,0,\id,-(E^{-1})^\dagger].
\end{equation}

Let us now consider the projectors $L_{17}$, $L_{18}$, $L_{19}$, and $L_{20}$.
Since $L_{17}\perp L_{2}$, $L_{17} \perp L_4$, $L_{19} \perp L_{1}$, and $L_{19}\perp L_3$, by applying Eq.~\eqref{lem1fact3}, we can take
\begin{equation}\label{eq:peresf}
L_{17} = [\id, 0, F, 0],\;\;\;\; L_{19} = [0,\id,0,G],
\end{equation}
where $F$ and $G$ are invertible matrices. Then, 
$L_{18} \perp L_2$, $L_{18} \perp L_4$, and $L_{18} \perp L_{17}$ imply
\begin{equation}
L_{18} = [\id,0,-(F^{-1})^\dagger,0].
\end{equation}
Similarly, $L_{20} \perp L_1$, $L_{20} \perp L_3$, and $L_{20} \perp L_{19}$ imply
\begin{equation}
L_{20} = [0,\id,0,-(G^{-1})^\dagger].
\end{equation}

Let us now consider the projectors $L_{5}$, $L_{6}$, $L_{7}$, and $L_{8}$. Then, $L_5\perp L_{14}$, $L_5\perp L_{16}$, and $L_5\perp L_{18}$ imply
\begin{equation}
L_5 = [\id,D,F,FE].
\end{equation}
Similarly, $L_6\perp L_{14}$, $L_6\perp L_{16}$, and $L_6\perp L_{17}$ imply
\begin{equation}
L_6 = [\id,D,-(F^{-1})^\dagger,-(F^{-1})^\dagger E].
\end{equation}
Also, $L_7\perp L_{13}$, $L_7\perp L_{15}$, and $L_7\perp L_{18}$ imply
\begin{equation}
L_7 = [\id,-(D^{-1})^\dagger,F,-F(E^{-1})^\dagger].
\end{equation}
Finally, $L_8\perp L_{13}$, $L_8\perp L_{15}$, and $L_8\perp L_{17}$ imply
\begin{equation}
L_8 = [\id,-(D^{-1})^\dagger,-(F^{-1})^\dagger,(F^{-1})^\dagger(E^{-1})^\dagger].
\end{equation}

From $L_5\perp L_{20}$ and $L_6\perp L_{19}$, 
\begin{equation}\label{eq:relation1}
DG = FE,\;\;\;\;F^\dagger D = E G^\dagger.
\end{equation}

Let us now consider the projectors $L_{9}$, $L_{10}$, $L_{11}$, and $L_{12}$.
Then, $L_9\perp L_{14}$, $L_9\perp L_{15}$, and $L_9\perp L_{18}$ imply
\begin{equation}
L_9 = [\id,D,F,-F(E^{-1})^\dagger].
\end{equation}
Also, $L_{10}\perp L_{14}$, $L_{10}\perp L_{15}$, and $L_{10}\perp L_{17}$ imply
\begin{equation}
L_{10} = [\id,D,-(F^{-1})^\dagger,(F^{-1})^\dagger (E^{-1})^\dagger].
\end{equation}
Similarly, $L_{11}\perp L_{13}$, $L_{11}\perp L_{16}$, and $L_{11}\perp L_{18}$ imply
\begin{equation} 
L_{11} = [\id,-(D^{-1})^\dagger,F,FE].
\end{equation}
Finally,
$L_{12}\perp L_{13}$, $L_{12}\perp L_{16}$, and $L_{12}\perp L_{18}$
imply
\begin{equation}L_{12} = [\id,-(D^{-1})^\dagger,-(F^{-1})^\dagger,-(F^{-1})^\dagger E].
\end{equation}

From $L_9\perp L_{19}$ and $L_{10}\perp L_{20}$,
\begin{equation}\label{eq:relation2}
D = F(E^{-1})^\dagger G^\dagger,\;\;\;\;D = (F^{-1})^\dagger E G^{-1}.
\end{equation}
From Eqs.~\eqref{eq:relation1} and \eqref{eq:relation2}, 
\begin{equation}\label{eq:unitariescond}
D D^\dagger = EE^\dagger = FF^\dagger = GG^\dagger = \id.
\end{equation}
We still have the freedom to apply a unitary on subspaces related to $L_1$, $L_2$, $L_3$, and $L_4$. Hence, we can set $D=E=F=\id$. Then, we have also $G=\id$ because of Eq.~\eqref{eq:relation1}.

Finally, let us consider the projectors $L_{21}$, $L_{22}$, $L_{23}$, and $L_{24}$.
Then,
$L_{21}\perp L_2$, $L_{21}\perp L_3$, and $L_{21}\perp L_6$ imply
\begin{equation}
L_{21} = [\id,0,0,\id].
\end{equation}
Also $L_{22}\perp L_2$, $L_{22}\perp L_3$, and $L_{22}\perp L_5$
imply
\begin{equation}
L_{22} = [\id,0,0,-\id].
\end{equation}
Similarly, $L_{23}\perp L_1$, $L_{23}\perp L_4$, and $L_{23}\perp L_6$ imply 
\begin{equation}
L_{23} = [0,\id,\id,0].
\end{equation}
Finally, $L_{24}\perp L_1$, $L_{24}\perp L_4$, and $L_{24}\perp L_5$
imply
\begin{equation}
L_{24} = [0,\id,-\id,0].
\end{equation}
Then, it easy to check that all the remaining orthogonality relations in Table~\ref{tab:peres24} are satisfied.

Using that for $i\in V$ and $(j,t) \in \{1,\dots,c\}$ such that $(i,j) \notin E$ and $(i,t) \in E$, each rank-$\kappa$ projector $\Pi_i$ that acts on $\mathbbm{C}^d$ can be decomposed as
\begin{equation} 
\Pi_i = L_i^\dagger (L_iL_i^\dagger)^{-1} L_i, \;\;\;\; L_i = [B_{1i},B_{2i},\ldots,B_{ci}],
\end{equation}
where $d=4\kappa$, it is easy to see that Peres-24 can be written in the form $\Pi_i = |v_i\rangle\langle v_i|\otimes \id$, where the $|v_i\rangle$'s are the normalized columns in Table~\ref{tab:peres24}.


\subsection{The Peres-Mermin square and its SI-C witness} 


Given 9 observables, $A$, $B$, $C$, $a$, $b$, $c$, $\alpha$, $\beta$, and $\gamma$, with possible outcomes $-1$ or $1$, the following inequality \cite{Cabello08} holds for any NCHV theory:
\begin{equation}\label{eq:pm}
\braket{ABC} + \braket{abc} + \braket{\alpha\beta\gamma} + \braket{Aa\alpha} + \braket{Bb\beta} - \braket{Cc\gamma} \overset{\rm NCHV}{\leqslant} 4.
\end{equation}
However, if we consider the following two-qubit observables:
\begin{equation}\label{eq:pm0}
\begin{bmatrix} A & B & C\\ a & b & c\\ \alpha & \beta & \gamma \end{bmatrix} = \begin{bmatrix} \sigma_z\otimes\id & \id\otimes\sigma_z&\sigma_z\otimes\sigma_z\\ \id\otimes\sigma_x&\sigma_x\otimes\id&\sigma_x\otimes\sigma_x\\ \sigma_z\otimes\sigma_x&\sigma_x\otimes\sigma_z&\sigma_y\otimes\sigma_y 
\end{bmatrix},
\end{equation} 
then the left-hand side of \eqref{eq:pm} is $6$, since, for any two-qubit state, 
\begin{equation} \braket{ABC}=\braket{abc}=\braket{Aa\alpha}=\braket{Bb\beta}=-\braket{Cc\gamma}=1.
\end{equation}
The observables in \eqref{eq:pm0} were introduced by Peres and Mermin \cite{Peres1990PLA,Mermin1990PRL}. The right-hand side of \eqref{eq:pm0} is usually referred to as the Peres-Mermin square or magic square.

The Peres-Mermin square is a SI-C set (although not of rank-one projectors) and inequality~\eqref{eq:pm} is equally violated by any quantum state in $d=4$ [although it is not of the form~\eqref{eq:gnci}].


\subsection{The relation between the Peres-Mermin square and Peres-24} 


The Peres-Mermin square is related to Peres-24 \cite{Peres:1991JPA}. Each row or column in \eqref{eq:pm0} contains compatible observables represented by operators whose product is $\id$, except for the last column, which is $-\id$. This implies that, according to quantum theory, only four events can happen for every set of three compatible observables. If, e.g., $[-+-|ABC]$ denotes the event: the results $-1$, $1$, and $-1$ are obtained when $A$, $B$, and $C$ are measured, respectively, then only the following $24$ events can happen: $[+++|ABC]$, $[+--|ABC]$, $[-+-|ABC]$, $[--+|ABC]$, $[+++|abc]$, $[+--|abc]$, $[-+-|abc]$, $[--+|abc]$, $[+++|\alpha\beta\gamma]$, $[+--|\alpha\beta\gamma]$, $[-+-|\alpha\beta\gamma]$, $[--+|\alpha\beta\gamma]$, $[+++|Aa\alpha]$, $[+--|Aa\alpha]$, $[-+-|Aa\alpha]$, $[--+|Aa\alpha]$, $[+++|Bb\beta]$, $[+--|Bb\beta]$, $[-+-|Bb\beta]$, $[--+|Bb\beta]$, $[++-|Cc\gamma]$, $[+-+|Cc\gamma]$, $[-++|Cc\gamma]$, and $[---|Cc\gamma]$. Each of these events is represented by a projector, which is the eigenprojector of the corresponding Hermitian operators with the corresponding eigenvalues. The $24$ projectors thus defined have the same orthogonality relations as the $24$ projectors of Peres-24. The relation between the Peres-Mermin square and Peres-24 allows us to prove that the Peres-Mermin square is unique up to unitary transformations.

For any set $\{A, B, C, a, b, c, \alpha,\beta,\gamma\}$ with the same relations of joint measurability given in Eq.~\eqref{eq:pm0} and whose products fulfil the same relationships as those fulfilled by the observables in Eq.~\eqref{eq:pm0}, we can obtain $24$ vectors with the same relations of orthogonality as those of Peres-24. Let us call $\{\Pi_i\}_{i=1}^{24}$ this set of rank-one projectors. As seen before, there is a unitary transformation $U$ such that $U(\Pi_i)U^\dagger = |v_i\rangle\langle v_i|\otimes \id := \tilde{\Pi}_i$, where $|v_i\rangle$ are the columns in Table~\ref{tab:peres24}. 


\subsection{Proof that the Peres-Mermin square is unique up to unitary transformations}


Since $A$, $B$, and $C$ are compatible, then
\begin{subequations}
\begin{align}
	& A^{+}B^{+} \succeq A^{+}B^{+}C^{+},\\
	& A^{+}B^{-} \succeq A^{+}B^{-}C^{-},\\ 
	& A^{-}B^{+} \succeq A^{-}B^{+}C^{-},\\
	& A^{-}B^{-} \succeq A^{-}B^{-}C^{+}, 
\end{align}
\end{subequations}
where $A^{+}$ and $A^{-}$ denote the projectors for the positive and negative eigenspaces of $A$, respectively. That is, $A = A^{+} - A^{-}$.
Hence, 
\begin{align}
U(A^{+})U^\dagger &= U(A^{+}B^{+} + A^{+}B^{-})U^\dagger \nonumber\\ 
& \succeq U(A^{+}B^{+}C^{+})U^\dagger + U(A^{+}B^{-}C^{-})U^\dagger \nonumber\\ 
& = U(\Pi_1)U^\dagger + U(\Pi_2)U^\dagger\nonumber\\ 
& = \tilde{\Pi}_1 + \tilde{\Pi}_2, \\
U(A^{-})U^\dagger & = U(A^{-}B^{+} + A^{-}B^{-})U^\dagger\nonumber\\ 
& \succeq U(A^{-}B^{+}C^{-})U^\dagger + U(A^{-}B^{-}C^{+})U^\dagger \nonumber\\ 
& = U(\Pi_3)U^\dagger + U(\Pi_4)U^\dagger\nonumber\\
& = \tilde{\Pi}_3 + \tilde{\Pi}_4. 
\end{align}
In addition,
\begin{equation}
A^{+} + A^{-} = \id
\end{equation}
and 
\begin{equation}
\tilde{\Pi}_1 + \tilde{\Pi}_2 + \tilde{\Pi}_3 + \tilde{\Pi}_4 = \id,
\end{equation}
which implies 
\begin{equation}
U(A^{+})U^\dagger = \tilde{\Pi}_1 + \tilde{\Pi}_2,\;\;\;\;U(A^{-})U^\dagger = \tilde{\Pi}_3 + \tilde{\Pi}_4.
\end{equation}
Consequently,
\begin{equation}
U(A) = U(A^{+})U^\dagger - U(A^{-})U^\dagger = \begin{bmatrix} 1&&&\\ &1&&\\ &&\bar{1}&\\ &&&\bar{1}\end{bmatrix} \otimes \id_{\kappa},
\end{equation}
where $\kappa=D/4$.
Similarly, other observables in the realization can be mapped with the same unitary $U$ into the ones in Eq.~\eqref{eq:pm0}.


\begin{table*}[t!]
\setlength{\tabcolsep}{0.6pt}
\begin{tabular}{CCCCCCCCCCCCCCCCCCCCCCCCCCCCCCCCCCCCCCCC}
	\hline \hline
	& v_1 & v_2 & v_3 & v_4 & v_5 & v_6 & v_7 & v_8 & v_9 & v_{10} & v_{11} & v_{12} & v_{13} & v_{14} & v_{15} & v_{16} & v_{17} & v_{18} & v_{19} & v_{20} & v_{21} & v_{22} & v_{23} & v_{24} & v_{25} & v_{26} & v_{27} & v_{28} & v_{29} & v_{30} & v_{31} & v_{32} & v_{33} & v_{34} & v_{35} & v_{36} & v_{37} & v_{38} & v_{39} \\
	\hline \rule{0pt}{3.2mm} 
	v_{i1}\;\; & 1 & 0 & 0 & 0 & 1 & 1 & 1 & 1 & 1 & 1 & 1 & 1 & 1 & 1 & 0 & 0 & 1 & 1 & 0 & 0 & 1 & 1 & 0 & 0 & 0 & 0 & 0 & 0 & 0 & 0 & 0 & 0 & 0 & 0 & 0 & 0 & 0 & 0 & 0 \\
	v_{i2}\;\; & 0 & 1 & 0 & 0 & 1 & 1 & \bar{1} & \bar{1} & 1 & 1 & \bar{1} & \bar{1} & 1 & \bar{1} & 0 & 0 & 0 & 0 & 1 & 1 & 0 & 0 & 1 & 1 & 0 & 1 & 1 & 1 & 1 & 1 & 1 & 1 & 1 & 0 & 0 & 0 & 0 & 1 & 1 \\
	v_{i3}\;\; & 0 & 0 & 1 & 0 & 1 & \bar{1} & 1 & \bar{1} & 1 & \bar{1} & 1 & \bar{1} & 0 & 0 & 1 & 1 & 1 & \bar{1} & 0 & 0 & 0 & 0 & 1 & \bar{1} & 0 & 1 & 1 & \bar{1} & \bar{1} & 1 & 1 & \bar{1} & \bar{1} & 0 & 0 & 1 & 1 & 0 & 0 \\
	v_{i4}\;\; & 0 & 0 & 0 & 1 & 1 & \bar{1} & \bar{1} & 1 & \bar{1} & 1 & 1 & \bar{1} & 0 & 0 & 1 & \bar{1} & 0 & 0 & 1 & \bar{1} & 1 & \bar{1} & 0 & 0 & 0 & 1 & \bar{1} & 1 & \bar{1} & 1 & \bar{1} & 1 & \bar{1} & 1 & 1 & 0 & 0 & 0 & 0 \\
	v_{i5}\;\; & 0 & 0 & 0 & 0 & 0 & 0 & 0 & 0 & 0 & 0 & 0 & 0 & 0 & 0 & 0 & 0 & 0 & 0 & 0 & 0 & 0 & 0 & 0 & 0 & 1 & 1 & \bar{1} & \bar{1} & 1 & \bar{1} & 1 & 1 & \bar{1} & 1 & \bar{1} & 1 & \bar{1} & 1 & \bar{1} \\
	\hline
	w_i\;\; & 24 & 8 & 8 & 8 & 4 & 4 & 4 & 4 & 4 & 4 & 4 & 4 & 6 & 6 & 7 & 7 & 6 & 6 & 7 & 7 & 6 & 6 & 7 & 7 & 24 & 4 & 4 & 4 & 4 & 4 & 4 & 4 & 4 & 6 & 6 & 6 & 6 & 6 & 6 \\
	\hline\hline
\end{tabular}
\caption{\label{tab:peres39}{\bf Peres-39.} Each column $v_i$ corresponds to one observable represented by the projector $|v_i\rangle \langle v_i|$. The rows $v_{ij}$ give the components of $|v_i\rangle$ (unnormalized). $\bar{1}=-1$. The last row contains the weights $w_i$ of the optimal SI-C witness of the form~\eqref{eq:peres39}. 
	With these weights, $\alpha(G,\vec{w})=46$ and $\vartheta(G,\vec{w})=50$.}
	\end{table*}
	
	
	\subsection{Peres-39 and its witness}
	
	
	One can obtain a KS set in $d=5$ by taking the $4$-dimensional vectors of Peres-24 and either appending or prepending $0$ to them \cite{Cabello05}. That is, if we call $\{\ket{u_i}\}_{i=1}^{24}$ the set of vectors in Peres-24, then
	\begin{equation}
\mathcal{V} = \{\ket{\mu_i}\}_{i=1}^{24} \cup \{\ket{\nu_i}\}_{i=1}^{24},
\end{equation} 
where
\begin{equation}
\bra{\mu_i} = (\bra{u_i},0),\;\;\;\; \bra{\nu_i} = (0,\bra{u_i}).
\end{equation}
is a KS set in $d=5$. $\mathcal{V}$ only contains $39$~vectors, since
\begin{align}
\ket{\mu_2} &= \ket{\nu_1},\;\; &\ket{\mu_3} &= \ket{\nu_2},\;\; &\ket{\mu_4} &= \ket{\nu_3}, \nonumber \\
\ket{\mu_{15}} &= \ket{\nu_{23}},\;\; &\ket{\mu_{16}} &= \ket{\nu_{24}},\;\; &\ket{\mu_{19}} &= \ket{\nu_{17}}, \nonumber \\
\ket{\mu_{20}} &= \ket{\nu_{18}},\;\; &\ket{\mu_{23}} &= \ket{\nu_{13}},\;\; &\ket{\mu_{24}} &= \ket{\nu_{14}}.
\end{align}

The resulting set is shown in Table~\ref{tab:peres39}. Hereafter, we will call it Peres-39. 

Let us consider the following witness:
\begin{equation}
\label{eq:peres39}
{\cal W}' := 
\sum_{C\in {\cal C}_5} \sum_{i\in C} P(\Pi_i =1) = \sum_{i=1}^{39} w_i P(\Pi_i=1) \le \alpha(G_{39},w),
\end{equation}
where $G_{39}$ is the graph of compatibility of Peres-39, ${\cal C}_5$ is the set of cliques of size $5$ in $G_{39}$, and $w_i$ is the frequency of $i$ in ${\cal C}$, which is shown in Table~\ref{tab:peres39}. Strictly speaking, ${\cal W}'$ is not a SI-C witness like those in Eq.~\eqref{eq:gnci} in which every $\Pi_i$ is in several contexts. Probably, there is a proper SI-C witness of the form \eqref{eq:gnci} for Peres-39, but we have not computational power to obtain it. Instead, we will assume that the projectors $\{\Pi_i\}$ in \eqref{eq:peres39} provides a quantum realization of the graph of orthogonality corresponding to $G_{39}$. Under this assumption, inequality \eqref{eq:peres39} holds.

For any state of $d=5$, $Q(G_{39},w) =\vartheta(G_{39},w) = |{\cal C}|$, which is the number of elements in ${\cal C}$ and therefore is also the algebraic maximum of ${\cal W}'$. This shows that Peres-39 is an egalitarian Lov\'{a}sz-optimum SI-C set.


\subsection{Proof that Peres-39 is unique up to unitary transformations}


By construction, in Peres-39, $\{\ket{\mu}_i\}_{i=1}^{24}$ contains only vectors orthogonal to $(0,0,0,0,1)$. Hence, all the basis of size $5$ which contains $(0,0,0,0,1)$ are just all the basis of size $4$ which are all orthogonal to $(0,0,0,0,1)$, i.e., all the complete basis in the subspace spanned by $\{\ket{\mu}_i\}_{i=1}^{24}$. 

Let us write Peres-39 as $\{P_i\}_{i=1}^{24} \cup \{Q_i\}_{i=1}^{24}$, where $P_i = Q_j$ if $\ket{\mu_i} = \ket{\nu_j}$. Then, $\{P_i\}_{i=1}^{24}$ is a realization of Peres-24 in the $4$-dimensional subspace spanned by $\{P_i\}_{i=1}^{24}$. 
Similarly, $\{Q_i\}_{i=1}^{24}$ is a realization of Peres-24 in the subspace spanned by $\{Q_i\}_{i=1}^{24}$. Then, we can apply that Peres-24 allows for CFR to each of them in their corresponding subspaces. 
Since the intersection of $\{P_i\}_{i=1}^{24},\{Q_i\}_{i=1}^{24}$ is not empty, all the projectors in the realization of Peres-39 have the same rank.

Without loss of generality, we can assume that
\begin{equation}
P_i = |\mu_i\rangle\langle \mu_i|\otimes \id_{d/5},\;\; \forall i=1,\ldots,24.
\end{equation}
In addition, $Q_4 = |\nu_4\rangle\langle\nu_4|\otimes \id_{d/5}$.
Then, we can copy for $\{Q_i\}_{i=1}^{24} $ the proof we used for Peres-24. Since $Q_i$ is fixed already for $i=1,2,3,13,14,17,18,23,24$, we know that $D=F=\id$ as in Eqs.~\eqref{eq:peresd} and~\eqref{eq:peresf} and, consequently, that
\begin{equation}
E = G,\;\;\;\; EG^\dagger = \id.
\end{equation}
Note that we still have the freedom to apply a local unitary to the subspace represented by $Q_4$. This implies that we can set $E=G=\id$, which fix the whole realization up to unitary transformations. Direct computation shows $\{P_i\}_{i=1}^{24} \cup \{Q_i\}_{i=1}^{24}$ realizes all the orthogonality relations in $G_{39}$.


\subsection{Proof that, for any $d \ge 6$, there are KS sets unique up to unitary transformations}


For any dimension $d\geqslant 4$, by patching $d-3$ copies of Peres-24 together (as we did in the construction leading to Peres-39), we can obtain a SI-C set in dimension $d$, which allows for CFR with respect to the witness \eqref{eq:peres39} and the orthogonality relations encoded in this set.

We can prove this recursively. Let us assume that we can prove it for the $d$-dimensional case. In the $(d+1)$-dimensional SI-C set, all the vectors whose last element is $0$ constitute, by construction, a $d$-dimensional SI-C set. We denote it as $\mathcal{P}$. In addition, we denote $\mathcal{Q}$ the last added Peres-24 SI-C set. Hence, all the maximal cliques of size $d+1$ which contain $(0,\ldots,0,1)$ correspond to all the maximal cliques of size $d$ for the $d$-dimensional SI-C set $\mathcal{P}$. As we discussed in the CFR of Peres-39, the fact that all the maximal cliques of size $d+1$ are complete bases leads to the fact that all the maximal cliques of size $d$ in $\mathcal{P}$ are complete bases for the subspace spanned by $\mathcal{P}$. Consequently, $\mathcal{P}$ allows for CFR.

By construction, the set of the vectors living in the last $5$~dimensions is a Peres-39, whose intersection with $\mathcal{P}$ is a Peres-24 $\mathcal{P}'$. After the CFR of $\mathcal{P}$, the Peres-24 $\mathcal{P}'$ is fixed. Same as in the CFR of Peres-39, this fixes the last added Peres-24 $\mathcal{Q}$ up to unitary transformations. Summing up, we have shown that the $(d+1)$-dimensional SI-C set allows for CFR.


\begin{table}
\begin{tabular}{CCCCCCCCCCCCCC}
	\hline\hline
	& v_1 & v_2 & v_3 & v_4 & v_5 & v_6 & v_7 & v_8 & v_9 & v_A & v_B & v_C & v_D \\
	\hline 
	\rule{0pt}{3.2mm} 
	v_{i1}\;\; & 1 & 0 & 0 & 0 & 0 & 1 & 1 & 1 & 1 & 1 & 1 & \bar{1} & 1 \\
	v_{i2}\;\; & 0 & 1 & 0 & 1 & 1 & 0 & 0 & 1 & \bar{1} & 1 & 1 & 1 & \bar{1} \\
	v_{i3}\;\; & 0 & 0 & 1 & 1 & \bar{1} & 1 & \bar{1} & 0 & 0 & 1 & \bar{1} & 1 & 1 \\
	\hline
	w_i\;\; & 3 & 3 & 3 & 3 & 3 & 3 & 3 & 3 & 3 & 2 & 2 & 2 & 2 \\
	\hline\hline
\end{tabular}
\caption{\label{tab:yo13}{\bf YO-13.} Each column $v_i$ corresponds to one observable represented by the projector $|v_i\rangle \langle v_i|$. The rows $v_{ij}$ give the components of $|v_i\rangle$ (unnormalized). $\bar{1}=-1$. The last row contains the weights $w_i$ of the optimal SI-C witness of the form~\eqref{eq:gnci}. The weights $w_{ij}$ in~\eqref{eq:gnci} can be chosen in any way that satisfies $w_{ij} \geq \max\{w_i, w_j\}$. With these weights, $\alpha(G,\vec{w})=11$ and, for any qutrit state, $Q(G,\vec{w})=\frac{35}{3}\approx 11.67$. However, $\vartheta(G,\vec{w}) \approx 11.977641$.}
\end{table}


\begin{figure}[t!]
\centering
\includegraphics[width=0.62\textwidth]{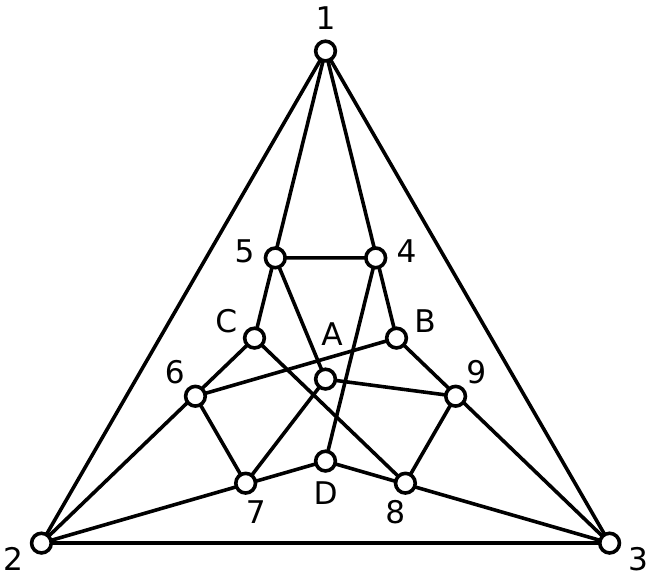}
\caption{\label{fig:yo13} Graph of compatibility of YO-13. Nodes represent observables and edges connect compatible observables. The nodes are labeled as the subindexes in Table~\ref{tab:yo13}.}
\end{figure} 


\subsection{YO-13}


YO-13 is the set of 13 rank-one projectors in $d =3$ shown in Table~\ref{tab:yo13} and whose graph of compatibility is shown in Fig.~\ref{fig:yo13}. YO-13 was introduced in \cite{Yu:2012PRL}. As shown in Table~\ref{tab:yo13},
YO-13 is an egalitarian SI-C set with respect to a contextuality witness of the form~\eqref{eq:gnci} and YO-13 is not Lov\'{a}sz-optimum. As proven in \cite{Cabello:2016JPA}, YO-13 is the smallest SI-C set of rank-one projectors in quantum theory (in any dimension). As it can be easily checked, YO-13 is not a KS set.


\subsection{Proof that YO-13, with normalization constraints, is unique up to unitary transformations}


YO-13 is unique up to unitary transformations if we assume the following normalization constraints:
\begin{subequations}
\begin{align}\label{eq:yoextra}
	p_1 + p_2 + p_3 = 1,\\
	p_1 + p_4 + p_5 = 1,\\
	p_2 + p_6 + p_7 = 1,\\
	p_3 + p_8 + p_9 = 1,
\end{align} 
\end{subequations}
where $p_i$ is the probability of obtaining the outcome $1$ when the projector $|v_i\rangle \langle v_i|$ is measured, with $v_i$ defined in Table.~\ref{tab:yo13}.

Following the same notation as before, without loss of generality, we can assume that
\begin{equation}
L_1 = [\id_\kappa,0,0],\;\;\; L_2 = [0,\id_\kappa,0],\;\;\; L_3 = [0,0,\id_\kappa].
\end{equation}
Hereafter, for simplicity, we will omit the subindex $\kappa$.
Since $L_1 \perp L_4$ and $L_1 \perp L_5$, we can assume that
\begin{equation}
L_4 = [0,\id,A],\;\;\;\; L_5 = [0,\id,A'].
\end{equation}
Then, since $L_4 \perp L_5$,
\begin{equation}
A' = -(A^{-1})^\dagger.
\end{equation}
Similarly, 
\begin{align}
&L_6 = [\id,0,B],\;\;\;\; L_7 = [\id,0,-(B^{-1})^\dagger],\\ &L_8 = [\id,C,0],\;\;\;\; L_9 = [\id,-(C^{-1})^\dagger,0].
\end{align}
Let us assume that
\begin{align}
&L_{A} = [\id,D,E],\;\;\;\; L_{B} = [\id,F,G],\\
&L_{C} = [\id,H,I],\;\;\;\; L_{D} = [\id,J,K].
\end{align}
Then, $L_{A} \perp L_{7}$ and $L_{A} \perp L_{9}$ imply
\begin{align}
D=C,\;\;\;\;E=B,
\end{align}
$L_{B} \perp L_{6}$ and $L_{B} \perp L_{9}$ imply 
\begin{align}
F=C,\;\;\;\; G=-(B^{-1})^\dagger,
\end{align}
$L_{C} \perp L_{6}$ and $L_{C} \perp L_{8}$ imply 
\begin{align}
H=-(C^{-1})^\dagger,\;\;\;\; I=-(B^{-1})^\dagger,
\end{align}
$L_{D} \perp L_{7}$ and $L_{D} \perp L_{8}$ imply 
\begin{align}
J=-(C^{-1})^\dagger,\;\;\;\; K=B.
\end{align}
In addition, $L_5 \perp L_{A}$ and $L_5 \perp L_{C}$ imply
\begin{align}
D = -E(A')^\dagger,\;\;\;\; H = -I(A')^\dagger.
\end{align}
This implies that,
\begin{equation}
B = CA,\;\;\;\; C = BA^\dagger.
\end{equation}
Then, $L_4 \perp L_{B}$ and $L_4 \perp L_{D}$ imply
\begin{equation}
F=-GA^\dagger,\;\;\;\; J=-KA^\dagger.
\end{equation}
This implies that,
\begin{equation}
C^{-1}=A B^\dagger,\;\;\;\; C^\dagger = AB^{-1}.
\end{equation}
Hence,
\begin{equation}
AA^\dagger = BB^\dagger = \id.
\end{equation}
Since we still have the freedom to rotate the subspaces corresponding to $L_2$ and $L_3$, we can set $A=B=\id$. Therefore,
\begin{equation}
C = D = E = F = K = \id,\;\;\;\; G = H = I = J= -\id. 
\end{equation}
This implies that there is an isometry between $\{\Pi_i\}_{i}$ and $\{2\ket{v_i}\bra{v_i} - \id_3\}_{i}$, where $\ket{v_i}$ is the normalized $i$-th column in Table.~\ref{tab:yo13}.


\begin{table*}[th!]
\begin{tabular}{CCCCCCCCCCCCCCCCCCCCCCCCCCCCCCCCCC}
	\hline\hline
	& v_1 & v_2 & v_3 & v_4 & v_5 & v_6 & v_7 & v_8 & v_9 & v_{10} & v_{11} & v_{12} & v_{13} & v_{14} & v_{15} & v_{16} & v_{17} & v_{18} & v_{19} & v_{20} & v_{21} & v_{22} & v_{23} & v_{24} & v_{25} & v_{26} & v_{27} & v_{28} & v_{29} & v_{30} & v_{31} & v_{32} & v_{33} \\ \hline \rule{0pt}{3.2mm} 
	v_{i1} & 1 & 0 & 0 & 0 & 0 & 1 & 1 & 1 & 1 & 0 & 0 & \bar{1} & 1 & \bar{1} & 1 & 0 & 0 & t & t & t & t & \bar{1} & \bar{1} & 1 & 1 & \bar{1} & \bar{1} & 1 & 1 & t & t & t & t \\
	v_{i2} & 0 & 1 & 0 & 1 & 1 & \bar{1} & 1 & 0 & 0 & \bar{1} & 1 & 0 & 0 & t & t & t & t & 0 & 0 & \bar{1} & 1 & \bar{1} & 1 & \bar{1} & 1 & t & t & t & t & \bar{1} & \bar{1} & 1 & 1 \\
	v_{i3} & 0 & 0 & 1 & \bar{1} & 1 & 0 & 0 & \bar{1} & 1 & t & t & t & t & 0 & 0 & \bar{1} & 1 & \bar{1} & 1 & 0 & 0 & t & t & t & t & \bar{1} & 1 & \bar{1} & 1 & \bar{1} & 1 & \bar{1} & 1 \\
	\hline
	w_i & 3& 3& 3& 1& 1& 1& 1& 1& 1& 1& 1& 1& 1& 1& 1& 1& 1& 1& 1& 1& 1& 1& 1& 1& 1& 1& 1& 1& 1& 1& 1& 1& 1\\
	\hline\hline
\end{tabular}
\caption{\label{tab:peres33}{\bf Peres-33.} 
	Each column $v_i$ corresponds to one observable represented by the projector $|v_i\rangle \langle v_i|$. The rows $v_{ij}$ give the components of $|v_i\rangle$ (unnormalized). $\bar{1}=-1$ and $t = \sqrt{2}$. The last row contains the weights $w_i$ of the optimal SI-C witness of the form~\eqref{eq:gnci}, where the weights $w_{ij}$ in~\eqref{eq:gnci} can be chosen in any way that satisfies $w_{ij} \geq \max\{w_i, w_j\}$. With these weights, 
	$\alpha(G,\vec{w})=12$ and $Q(G,\vec{w})=\vartheta(G,\vec{w})=13$ for all qutrit states.}
	\end{table*}
	
	
	\subsection{Peres-33 and its optimum contextuality witness}
	
	
	Peres-33 is the KS set of rank-one projectors in $d=3$ shown in Table~\ref{tab:peres33}. It was introduced in \cite{Peres:1991JPA}. 
	It is the KS set in $d=3$ with the smallest number of bases known ($16$).
	
	
	\subsection{Proof that Peres-33 is not unique up to unitary transformations}
	
	
	The existence of unitarily inequivalent orthogonality representations of the orthogonality graph of Peres-33 has already been proven in Refs.~\cite{gould2010isomorphism,bengtsson2012gleason}. For completeness, we provide another proof here.
	
	As shown below, Peres-33 contains, induced, three copies of YO-13. Specifically, using the notation of Table~\ref{tab:peres33}, the three copies are
	\begin{subequations}
\label{se131}
\begin{align}
	S_1 & = \{v_1,v_4,v_5,v_2,v_3,v_{30},v_{33},v_{31},v_{32},v_{14},v_{13},v_{15},v_{12}\},\\
	S_2 & = \{v_2,v_8,v_9,v_1,v_3,v_{26},v_{29},v_{27},v_{28},v_{20},v_{11},v_{21},v_{10}\},\\
	S_3 & = \{v_3,v_6,v_7,v_1,v_2,v_{22},v_{25},v_{23},v_{24},v_{18},v_{17},v_{19},v_{16}\}.
\end{align}
\end{subequations}
The graph of compatibility of each copy $S_k$ corresponds to the graph in Fig.~\ref{fig:yo13}, assuming that the ordering of the vectors in Eqs.~\eqref{se131} is the same used in Table~\ref{tab:yo13}.
Notice that $S_i \cap S_j = (1,2,3) := S_0$ for $i \neq j$, but otherwise, the three sets are not tightly connected to each other.

A direct calculation shows that there is another realization of the graph of compatibility of Peres-33, where
\begin{equation}
\langle u_i| = 
\begin{cases}
	(v_{i1}, v_{i2}, v_{i3}), i\in S_1\\
	(v_{i1}, v_{i2}, \mathbbm{i} v_{i3}), i\in S'_2\\
	(v_{i1}, - \mathbbm{i} v_{i2}, v_{i3}), i\in S'_3,
\end{cases}
\end{equation}
where $v_{ij}$ are the ones in Table~\ref{tab:peres33} and $S'_2 = S_2 \setminus S_0$, $S'_3 = S_3 \setminus S_0$. However, $\{|v_i\rangle\}_{i=1}^{33}$ and $\{|u_i\rangle\}_{i=1}^{33}$ cannot be transformed to each other by either a unitary or an antiunitary transformation, since the set $\{|\langle u_i | u_j\rangle|\}$ is different from $\{|\langle v_i | v_j \rangle|\}$.


\section{Tools used in the proof of Result~2}
\label{app:C}


Consider the situation in which we are given a set of black boxes, each of them supposedly implementing an ideal measurement of one of the elements of a SI-C set $\{\Pi_i\}$ with graph of compatibility $G$ (with vertex set $V$ and edge set $E$) and whose optimal noncontextuality inequality of the form \eqref{eq:gnci} has weights $\{w_i\}$ given in the proof of Result~1.

Firstly, we consider the case with perfect orthogonality relations and the imperfectness is only in the violation.
\begin{lemma}\label{lemma:rank}
If, for any quantum state, 
\begin{equation}
	P(\Pi_i=\Pi_j=1)=0
\end{equation}
for any $(i,j) \in E$ and 
\begin{equation}
	\sum_{i \in V} w_i P(\Pi_i=1) > Q - \epsilon,
\end{equation}
where $Q$ is $\frac{35}{3}$, $\frac{9}{2}$, $6$, and $10$, and $\epsilon$ is $0.13159$, $0.13397$, $0.17712$, and $0.20808$ for, respectively, BBC-21, CEG-18, Peres-24, and YO-13, and, only in the case of YO-13, for any quantum state satisfying
\begin{subequations}
	\label{eq:yo13add}
	\begin{align}
		P(\Pi_1=1) + P(\Pi_2=1) + P(\Pi_3=1) &=1, \\ 
		P(\Pi_1=1) + P(\Pi_4=1) + P(\Pi_5=1) &=1, \\
		P(\Pi_2=1) + P(\Pi_6=1) + P(\Pi_7=1) &=1, \\
		P(\Pi_3=1) + P(\Pi_8=1) + P(\Pi_9=1) &=1,
	\end{align}
\end{subequations}
then,
\begin{itemize}
	\item[(I)] $\forall (i,j) \notin E$, $\rank(\Pi_j\Pi_i\Pi_j) = \rank(\Pi_j)$.
	
	\item[(II)] $\forall i \in V$, rank of $\Pi_i$ is the same, say, $\rank(\Pi_i) = \kappa$.
	
	\item[(III)] For $i\in V$ and $(j,t) \in \{1,\dots,c\}$ such that $(i,j) \notin E$ and $(i,t) \in E$, each rank-$\kappa$ projector $\Pi_i$ that acts on $\mathbbm{C}^d$ can be decomposed as
	\begin{equation}
		\Pi_i = L_i^\dagger (L_iL_i^\dagger)^{-1} L_i, \;\;\;\; L_i = [B_{1i},B_{2i},\ldots,B_{ci}],
	\end{equation}
	where $L_i$ are $d \times \kappa$ matrices, $B_{ji}$ are $\kappa\times \kappa$ invertible matrices, $B_{ti}=0$ and we can take $B_{ji} = \id$ for any~$j$. Moreover, 
	\be 
	\Pi_i\Pi_t = 0 \implies L_i (L_t)^\dagger = 0 .
	\ee 
\end{itemize}
\end{lemma}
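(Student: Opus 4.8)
The plan is to prove the three claims in the stated order, since each rests on the previous, after first extracting from the hypotheses the exact orthogonality $\Pi_i\Pi_j=0$ on all edges. Indeed, $P(\Pi_i=\Pi_j=1)=\braket{\psi|\Pi_i\Pi_j\Pi_i|\psi}=0$ for every state $\ket{\psi}$ and every $(i,j)\in E$ forces $\Pi_i\Pi_j\Pi_i=0$, hence $\Pi_i\Pi_j=0$, so the compatibility graph coincides with the exact orthogonality graph exactly as in Theorem~\ref{thm:bridge}. This is the only role of the edge hypothesis, and it reduces the rest to rank counting for a family of exactly orthogonal projectors whose weighted sum is, by the remaining hypothesis, within $\epsilon$ of the quantum maximum $Q$ on \emph{every} state.

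For claim~(I) I would argue by contradiction. If some non-edge $(i,j)\notin E$ had $\rank(\Pi_j\Pi_i\Pi_j)<\rank(\Pi_j)$, then $\Pi_j$ would have a unit vector $\ket{s}$ in its range annihilated by $\Pi_i$, so $\braket{s|\Pi_j|s}=1$ and $\braket{s|\Pi_i|s}=0$. Applying the value hypothesis to $\ket{s}$ gives $\sum_{k}w_k\braket{s|\Pi_k|s}>Q-\epsilon$. I would then build the moment matrix $T_{kl}=\braket{s|\Pi_{k-1}\Pi_{l-1}|s}$ with $\Pi_0=\id$; it is positive semidefinite by construction and, using only $\Pi_k^2=\Pi_k$ and the edge orthogonality above, satisfies every linear constraint of Eq.~\eqref{eq:sdp_relaxation}, now with the value equality relaxed to $\sum_{k\ge2}w_{k-1}T_{1k}\ge Q-\epsilon$ and the extra entries $T_{1j}=1$, $T_{1i}=0$ imposed. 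Thus a rank deficiency produces a feasible point. The task is then to show this relaxation is \emph{infeasible} for each listed $(Q,\epsilon)$ pair, which I would certify by exhibiting a dual infeasibility witness so that the numerics become a rigorous proof. For YO-13 the relaxation must also include the normalization equalities of Eqs.~\eqref{eq:yo13add}, which is precisely why its threshold differs and why the extra assumption is stated.

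Claim~(II) then follows immediately: for any non-adjacent pair, setting $M=\Pi_i\Pi_j$ gives $MM^\dagger=\Pi_i\Pi_j\Pi_i$ and $M^\dagger M=\Pi_j\Pi_i\Pi_j$, so these operators share their rank; combined with (I) applied both ways this gives $\rank(\Pi_i)=\rank(\Pi_j)$, and connectedness of the complement of $G$ propagates the equality to a common rank $\kappa$. For claim~(III) I would fix a clique of $c=d/\kappa$ mutually orthogonal rank-$\kappa$ projectors summing to $\id$ as a block basis and write $\Pi_i={L'_i}^\dagger L'_i$ from an orthonormal basis of its range, so that $L'_i=[B_{1i},\dots,B_{ci}]$ with $L'_i{L'_i}^\dagger=\id_\kappa$ and $\Pi_t\Pi_i\Pi_t=B_{ti}^\dagger B_{ti}$. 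Edge orthogonality forces $B_{ti}=0$ when $(i,t)\in E$, while (I) forces $B_{ti}$ invertible when $(i,t)\notin E$; rescaling $L_i:=B_{ji}^{-1}L'_i$ for a non-adjacent $j$ sets that block to $\id$ without altering the row space, so Eq.~\eqref{lem1fact3} gives $L_i^\dagger(L_iL_i^\dagger)^{-1}L_i=\Pi_i$ and Eq.~\eqref{lem1fact32} gives $\Pi_i\Pi_t=0\Rightarrow L_iL_t^\dagger=0$.

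The main obstacle is squarely claim~(I): everything downstream is elementary linear algebra, but (I) hinges on proving that a semidefinite feasibility problem has \emph{no} solution, and doing so rigorously up to the sharp thresholds $\epsilon$. The delicate points are verifying that Eq.~\eqref{eq:sdp_relaxation} is a genuine relaxation of the physical configuration, so that its infeasibility is conclusive; producing a dual certificate robust to floating-point rounding; and strengthening the YO-13 relaxation with its normalization constraints, since without them the bare relaxation would be too weak to be infeasible.
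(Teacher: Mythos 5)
Your proposal follows essentially the same route as the paper's proof: exact orthogonality on edges from the vanishing joint probabilities, claim~(I) by contradiction via a positive-semidefinite moment matrix $T$ whose feasibility is excluded by the SDP relaxation~\eqref{eq:sdp_relaxation} (with the value constraint relaxed to $Q-\epsilon$ and, for YO-13, the constraints~\eqref{eq:yo13add} added), claim~(II) by rank propagation over the connected complement of $G$, and claim~(III) by the block decomposition with invertible off-diagonal blocks from~(I). Your additional insistence on a dual infeasibility certificate to make the numerical SDP step rigorous is a sound refinement of, not a departure from, the paper's argument.
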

\begin{proof}
The proof of (I) follows from SDP. Notice that $\Pi_j \Pi_i \Pi_j \preceq \Pi_j$. If there is a pair $(i,j) \not\in E$ such that $\rank(\Pi_j\Pi_i\Pi_j) < \rank(\Pi_j)$, then there is a unit vector $\ket{v}$ in the subspace represented by $\Pi_j$ that is orthogonal to the subspace spanned by the image of $\Pi_j\Pi_i\Pi_j$. This leads to the linear conditions
\begin{equation}\label{eq:extralinear}
	\bra{v} \Pi_j \ket{v} = 1,\; \bra{v} \Pi_j\Pi_i\Pi_j \ket{v} = \bra{v} \Pi_i \ket{v} = 0.
\end{equation}
That is,
\be
T_{jj} = 1,\;\;\;\; T_{ii} = 0,
\ee
where $T_{lk} = \bra{v} \Pi_l\Pi_k \ket{v}$. By definition, $T_{lk}$ is positive semidefinite.
In addition, the orthogonal relations imply that
\begin{equation}\label{eq:orgorthogonal}
	T_{kl} = 0,\; \forall (k, l) \in E.
\end{equation}
The quantum violation of state $\ket{v}$ is a linear function of $T_{kk}$ whose upper bound can be calculated through the SDP under the conditions $\Pi_i\Pi_j = 0$ for any $(i,j) \in E$ and \eqref{eq:extralinear}. It cannot always be smaller than $Q - \epsilon$ for all $(i,j) \not\in E$. Hence, if, for any quantum state, the violation is not smaller than $Q - \epsilon$, then $\rank(\Pi_j\Pi_i\Pi_j) = \rank(\Pi_j)$ for any $(i,j) \not\in E$.\\

Proof of (II). From (I), $\forall (i,j) \not\in E$,
\begin{equation}
	\rank(\Pi_i) \ge \rank(\Pi_j\Pi_i\Pi_j) = \rank(\Pi_j),
\end{equation}
which implies that $\rank(\Pi_i) = \rank(\Pi_j)$ for any $(i,j) \not\in E$.
Since the complement of $G$ is connected, we conclude that the rank of all the projectors are same. \\ 

Proof of (III). Due to (I) and the existence of complete basis with $c$ projectors, we have the relation $d=c\kappa$. Then, each rank-$\kappa$ projector $\Pi_i$ can be decomposed as
\begin{equation}
	\Pi_i = (L'_i)^\dagger L'_i, \;\;\;\; L'_i = [B_{1i},B_{2i},\ldots,B_{ci}],
\end{equation} 
where $L'_i$ are some $d \times \kappa$ matrices and $B_{ji}$'s are some $\kappa\times \kappa$ matrices.
In the scenarios we considered, we can always choose $\{\Pi_j\}_{j=1}^c$ to be the complete basis and take it to be the standard basis, that is,
\begin{subequations}
	\begin{align}
		& L'_1 = [\id_\kappa,0,\ldots,0],\\
		& L'_2 = [0,\id_\kappa,\ldots,0],\\
		& \vdots \nonumber \\ 
		& L'_c = [0,0,\ldots,\id_\kappa].
	\end{align}
\end{subequations}
Using the above decomposition, we have
\begin{equation}\label{eq:lemma1eq1}
	\Pi_j \Pi_i \Pi_j = B_{ji}^\dagger B_{ji}.
\end{equation}
Due to (I) and Eq.~\eqref{eq:lemma1eq1}, for $(i,j) \not\in E$, $j\in\{1,\ldots,c\}$, the rank of $B_{ji}$ should be $\kappa$ and hence, this matrix $B_{ji}$ is invertible. While for $(i, t)\in E$ and $t\in\{1,\ldots,c\}$ the left-hand-side of Eq.~\eqref{eq:lemma1eq1} is zero and therefore $B_{ti}=0$. \\
Using that $B_{ji}$'s are either invertible or zero-matrices, a straight-forward calculation shows that 
\be \label{eq:lem1eq5}
\forall i, \ \Pi_i^2 = \Pi_i \implies L'_i (L'_i)^\dagger = \id_\kappa 
\ee 
and 
\be \label{eq:lem1eq6}
\Pi_i\Pi_t = 0 \implies L'_i (L'_t)^\dagger = 0 .
\ee 
By taking $L_i = B_{ji}^{-1}L'_i$ for some invertible $B_{ji}$ and by using Eq.~\eqref{eq:lem1eq5}, we can express $\Pi_i$ as
\beq 
\Pi_i & =& (L'_i)^\dagger (L'_i(L'_i)^\dagger)^{-1} L'_i \nonumber \\
& =& L_i^\dagger (L_iL_i^\dagger)^{-1} L_i .
\eeq 
Finally, notice that Eq.~\eqref{eq:lem1eq6} implies $L_iL_t^\dagger = 0$.
\end{proof}


\subsection{Robustness with imperfect orthogonality relations} \label{app:robustness}


To analyze the robustness when the orthogonality relations have not been exactly ensured, we introduce the following lemmas and one assumption.

\begin{lemma}\label{cl:id}
For a given matrix $T \preceq \lambda_1 \id$ in $\mathcal{H}$, if $\braket{s|T|s} = \lambda_1$, $\forall \ket{s} \in S$, where $S$ is a linear basis of $\mathcal{H}$, we have $T = \lambda_1 \id$ in $\mathcal{H}$.
\end{lemma}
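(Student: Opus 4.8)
The plan is to reformulate the hypothesis in terms of the positive-semidefinite matrix $M := \lambda_1\id - T$ and then use positivity to turn a scalar condition into an operator identity. By assumption $T \preceq \lambda_1\id$, so $M \succeq 0$ (in particular $M$ is Hermitian, as the L\"owner order requires). Reading the $\ket{s}$ as unit vectors, as in the intended application, the hypothesis $\braket{s|T|s}=\lambda_1$ gives, for every $\ket{s}\in S$,
\begin{equation}
\braket{s|M|s} = \lambda_1\braket{s|s} - \braket{s|T|s} = \lambda_1 - \lambda_1 = 0.
\end{equation}

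The key step is to upgrade $\braket{s|M|s}=0$ into $M\ket{s}=0$. Because $M\succeq 0$, it admits a factorization $M = B^\dagger B$ (for instance $B = M^{1/2}$), whence $0 = \braket{s|M|s} = \braket{s|B^\dagger B|s} = \|B\ket{s}\|^2$. This forces $B\ket{s}=0$, and therefore $M\ket{s} = B^\dagger B\ket{s} = 0$, for every $\ket{s}\in S$. This is precisely where positivity is essential: for a generic matrix a vanishing quadratic form $\braket{s|M|s}=0$ constrains nothing about the vector $M\ket{s}$, but for $M\succeq 0$ it annihilates it.

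Finally I would invoke that $S$ is a linear basis of $\mathcal{H}$, hence spanning: any $\ket{v}\in\mathcal{H}$ can be written $\ket{v}=\sum_{s} c_s\ket{s}$, so by linearity $M\ket{v} = \sum_s c_s M\ket{s} = 0$. Thus $M$ annihilates all of $\mathcal{H}$, i.e.\ $M=0$, which is exactly $T=\lambda_1\id$.

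I do not expect a serious obstacle here; the only points demanding care are that the passage from $\braket{s|M|s}=0$ to $M\ket{s}=0$ genuinely relies on $M\succeq 0$ (it would fail for an arbitrary Hermitian matrix), and that only the \emph{spanning} property of $S$ is used, not orthonormality. The latter is the crux of why the lemma is useful in the later robustness analysis, where the relevant $\ket{s}$ form a nonorthogonal basis.
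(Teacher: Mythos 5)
Your proof is correct and is essentially the paper's argument in different clothing: the paper states that $\braket{s|T|s}=\lambda_1$ forces $\ket{s}$ into the eigenspace $E_{\lambda_1}$ of the maximal eigenvalue and then uses $\mathcal{H}=\operatorname{span}(S)$, which is exactly your step $(\lambda_1\id-T)\ket{s}=0$ followed by linearity. The only difference is that you justify the key implication explicitly via the factorization $\lambda_1\id-T=B^\dagger B$, a step the paper asserts without proof, and both arguments share the same implicit normalization $\braket{s|s}=1$.
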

\begin{proof}
Obviously, $\lambda_1$ should be the maximal eigenvalue of $T$. Consequently, $\braket{s|T|s} = \lambda_1$ implies that $\ket{s} \in E_{\lambda_1}$, where $E_{\lambda_1}$ is the eigenspace of $T$ for the eigenvalue $\lambda_1$.
Hence, by assumption, $S \subseteq E_{\lambda_1}$. By definition, $\mathcal{H} = \operatorname{span}(S)$, which implies
\begin{equation}
	\mathcal{H} \subseteq E_{\lambda_1}.
\end{equation}
Thus, $E_{\lambda_1} = \mathcal{H}$. Equivalently, $T = \lambda_1 \id$.
\end{proof}

\begin{lemma}
For any two projectors $\Pi_1, \Pi_2$ from a given setting, if $\rank(\Pi_1) > \rank(\Pi_2)$, then $\exists \ket{s} \in \mathcal{S}$ s.t. 
\begin{equation}
	\braket{s|\Pi_1|s} = 1,\;\;\;\; \braket{s|\Pi_2|s} = 0.
\end{equation}
\end{lemma}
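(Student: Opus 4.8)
The plan is to establish this as a pure dimension-counting statement about the ranges of the two projectors. First I would set $V_1 = \operatorname{range}(\Pi_1)$ and $V_2 = \operatorname{range}(\Pi_2)$, so that $\dim V_1 = \rank(\Pi_1)$ and $\dim V_2 = \rank(\Pi_2)$, with $\dim V_1 > \dim V_2$ by hypothesis. The key observation is that, for a unit vector, the target condition $\braket{s|\Pi_1|s}=1$ is equivalent to $\ket{s}\in V_1$, while $\braket{s|\Pi_2|s}=0$ is equivalent to $\Pi_2\ket{s}=0$ (i.e.\ $\ket{s}$ orthogonal to $V_2$). Thus the whole claim reduces to producing a single nonzero vector that lies in $V_1$ and is annihilated by $\Pi_2$.

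Next I would consider the restricted linear map $\Pi_2|_{V_1}\colon V_1 \to V_2$, whose image is contained in $V_2 = \operatorname{range}(\Pi_2)$. Applying the rank--nullity theorem to this map gives $\dim\ker(\Pi_2|_{V_1}) \ge \dim V_1 - \dim V_2 > 0$, where the strict inequality is precisely the content of the hypothesis $\rank(\Pi_1) > \rank(\Pi_2)$. Hence there exists a nonzero $\ket{w}\in V_1$ with $\Pi_2\ket{w}=0$, and I would normalize it to obtain $\ket{s} = \ket{w}/\sqrt{\braket{w|w}}\in\mathcal{S}$.

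Finally I would verify the two equalities directly: since $\ket{s}\in V_1$ we have $\Pi_1\ket{s}=\ket{s}$, so $\braket{s|\Pi_1|s}=\braket{s|s}=1$; and since $\Pi_2\ket{s}=0$ we obtain $\braket{s|\Pi_2|s}=0$. This closes the argument. I do not expect a genuine obstacle here, as the statement is elementary linear algebra; the only point requiring a moment's care is checking that $\Pi_2|_{V_1}$ indeed takes values in $V_2$, so that rank--nullity yields a strictly positive nullity, and it is exactly the strict rank inequality that makes the kernel nontrivial.
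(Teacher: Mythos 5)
Your rank--nullity argument is correct linear algebra and reproduces the paper's first step: the hypothesis $\rank(\Pi_1) > \rank(\Pi_2)$ forces the intersection of $\operatorname{range}(\Pi_1)$ with $\ker(\Pi_2)$ to be nontrivial. However, there is a gap at the very last move, where you write $\ket{s} = \ket{w}/\sqrt{\braket{w|w}} \in \mathcal{S}$ without justification. In this lemma $\mathcal{S}$ is not the unit sphere of the Hilbert space: it is the set of states that can actually be prepared in the given experimental setting (a spanning set of states closed under post-measurement updates by the projectors of the setting). Your argument only produces an abstract unit vector in $\operatorname{range}(\Pi_1)\cap\ker(\Pi_2)$; it says nothing about whether that vector is reachable in the experiment, which is precisely the content the lemma needs. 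This matters because the lemma sits inside the robustness analysis, where rank mismatches must be \emph{experimentally detectable}: one needs a preparable state that witnesses $\braket{s|\Pi_1|s}=1$ and $\braket{s|\Pi_2|s}=0$, not merely the existence of such a vector in $\mathbbm{C}^d$.

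The paper closes exactly this gap with von Neumann's alternating projection theorem: writing $\Pi_0$ for the projector onto $S_0 = \operatorname{range}(\Pi_1)\cap\ker(\Pi_2)$, one has $\Pi_0 = \lim_{n\to\infty}[\Pi_1(\id-\Pi_2)]^n$. Since $\mathcal{S}$ spans the space and $\Pi_0 \neq 0$ (this is where your dimension count enters), there is some $\ket{s_0}\in\mathcal{S}$ with $\Pi_0\ket{s_0}\neq 0$, and repeatedly measuring $\Pi_2$ (keeping outcome $0$) and $\Pi_1$ (keeping outcome $1$) drives this state into $S_0$; every state along the way, and hence the resulting witness $\ket{s}$, lies in $\mathcal{S}$ by closure under post-measurement updates. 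To repair your proof you would need to add this (or an equivalent) preparation argument; the dimension counting alone does not suffice.
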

\begin{proof}
The condition $\rank(\Pi_1) > \rank(\Pi_2)$ implies that the intersection $S_0$ of the subspace s represented by $\Pi_1$ and $\id-\Pi_2$ is not empty, denote $\Pi_0$ the projector onto subspace $S_0$. Von Neumann has proven that~\cite{von1950functional}
\begin{equation}
	\Pi_0 = \lim_{n\to\infty} [\Pi_1(\id-\Pi_2)]^n,
\end{equation}
where $\Pi_i, i=0,1,2$ represent the actions of projection instead of matrices.

By definition, there should be at least one state $\ket{s_0} \in \mathcal{S}$ such that $\Pi_0 \ket{s_0} \neq 0$, since $\Pi_0 \neq 0$. Thus, by repeating the projection $\Pi_1, \Pi_2$ with the initial state $\ket{s_0}$, we can finally obtain a state $\ket{s}$ such that $\braket{s|\Pi_0|s} = 1$.
By definition of $\Pi_0$, we know that $\Pi_1 \succeq \Pi_0, \Pi_2 \succeq \Pi_0$. Hence, we have
\begin{equation}
	\braket{s|\Pi_1|s} = \braket{s|\id-\Pi_2|s} = 1.
\end{equation}
\end{proof}
\begin{lemma}
For any two projectors $\Pi_1$ and $\Pi_2$ from a given setting, $\forall \epsilon>0$, $\exists \ket{s} \in \mathcal{S}$ such that
\begin{equation}
	\braket{s|\Pi_1\Pi_2\Pi_1|s} \ge \lambda_1(\Pi_1\Pi_2\Pi_1) - \epsilon,
\end{equation}
where $\lambda_1(\cdot)$ is the maximal eigenvalue.
\end{lemma}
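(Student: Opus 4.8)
The plan is to implement the power iteration of the positive semidefinite operator $M:=\Pi_1\Pi_2\Pi_1$ using only the sequential measurements that define $\mathcal{S}$. First I would dispose of the trivial case $\lambda_1(M)=0$: since $M\succeq 0$ this forces $M=0$, so every $\ket{s}\in\mathcal{S}$ gives $\braket{s|M|s}=0=\lambda_1(M)$. Hence assume $\lambda_1:=\lambda_1(M)>0$ and let $\ket{\phi}$ be a unit eigenvector for $\lambda_1$. Because $\operatorname{range}(M)\subseteq\operatorname{range}(\Pi_1)$, every eigenvector with nonzero eigenvalue lies in $\operatorname{range}(\Pi_1)$, so $\Pi_1\ket{\phi}=\ket{\phi}$ and $\bra{\phi}=\bra{\phi}\Pi_1$.

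The key observation is that the alternating measurement sequence realizes powers of $M$. For any $\ket{\psi}\in\operatorname{range}(\Pi_1)$ one has $M\ket{\psi}=\Pi_1\Pi_2\ket{\psi}$, which is exactly the (unnormalized) post-measurement state obtained by measuring $\Pi_2$ and then $\Pi_1$, conditioning on outcome $1$ both times. Iterating the block $(\Pi_2,\Pi_1)$ a total of $k$ times therefore produces, up to normalization, $M^k\ket{\psi}$, and every intermediate post-measurement state belongs to $\mathcal{S}$; whenever $M^k\ket{\psi}\neq 0$ all the partial products are nonzero, so each step has nonzero probability and is genuinely reachable.

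It remains to choose a good starting vector and invoke convergence. Since $\mathcal{S}$ spans $\mathcal{H}$ (this is where the full-rank assumption on the initial state enters) and $\ket{\phi}\neq 0$, there is a pure $\ket{\psi_0}\in\mathcal{S}$ with $\braket{\phi|\psi_0}\neq 0$; measuring $\Pi_1$ with outcome $1$ yields $\ket{\psi_0'}\in\operatorname{range}(\Pi_1)$ with $\braket{\phi|\psi_0'}\neq 0$, using $\bra{\phi}=\bra{\phi}\Pi_1$. Expanding $\ket{\psi_0'}=\sum_j c_j\ket{\phi_j}$ in an eigenbasis of $M$ with eigenvalues $\lambda_1\ge\lambda_2\ge\cdots\ge 0$ and $c_1\neq 0$, the normalized iterate $\ket{s_k}:=M^k\ket{\psi_0'}/\|M^k\ket{\psi_0'}\|\in\mathcal{S}$ satisfies
\begin{equation}
\braket{s_k|M|s_k}=\frac{\sum_j |c_j|^2\lambda_j^{2k+1}}{\sum_j |c_j|^2\lambda_j^{2k}}\xrightarrow{k\to\infty}\lambda_1,
\end{equation}
the ratio being a weighted average of the $\lambda_j$ whose weights concentrate on $\lambda_1$ as $k$ grows. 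Thus for every $\epsilon>0$ a large enough $k$ gives $\braket{s_k|\Pi_1\Pi_2\Pi_1|s_k}\ge\lambda_1-\epsilon$ with $\ket{s_k}\in\mathcal{S}$, as required.

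The step I expect to be most delicate is securing $c_1\neq 0$, i.e.\ that $\mathcal{S}$ really contains a state with nonzero overlap on the top eigenspace of $M$; this rests entirely on $\mathcal{S}$ spanning $\mathcal{H}$, which is the same full-rank property of the initial state used elsewhere in this appendix. A possible degeneracy of $\lambda_1$ is harmless: the iteration then converges into the $\lambda_1$-eigenspace, on which the Rayleigh quotient is identically $\lambda_1$, so the estimate is unchanged.
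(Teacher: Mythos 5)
Your proof is correct and follows essentially the same route as the paper: both run a power iteration $\ket{s_n}\propto (\Pi_1\Pi_2\Pi_1)^n\ket{s}$, realized physically by repeating the sequential measurements of $\Pi_2$ and $\Pi_1$, so that the Rayleigh quotient converges to the maximal eigenvalue. You are in fact slightly more careful than the paper, whose proof only asks for a seed state with $\Pi_1\Pi_2\Pi_1\ket{s}\neq 0$; your explicit condition $c_1\neq 0$ (nonzero overlap with the top eigenspace, secured because $\mathcal{S}$ spans $\mathcal{H}$) is what convergence to $\lambda_1$, rather than to the largest eigenvalue merely in the support of the seed, actually requires.
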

\begin{proof}
If $\Pi_1 \perp \Pi_2$, any choice of $\ket{s}$ gives the conclusion. Otherwise, $\lambda_1(\Pi_1\Pi_2\Pi_1) > 0$. In addition, there is a state $\ket{s}$ such that $\Pi_1\Pi_2\Pi_1\ket{s} \neq 0$. Denote by $\ket{s_n}$ the post-measurement state after the repetition, $n$times, of measurements $\Pi_2$ and $\Pi_1$, that is,
\begin{equation}
	\ket{s_n} \propto (\Pi_1\Pi_2\Pi_1)^n \ket{s}.
\end{equation}
Since $\lim_{n\to \infty}(\lambda/\lambda_1)^n = 0$, for any eigenvalue $\lambda$ of $\Pi_1\Pi_2\Pi_1$ which is less than $\lambda_1$, we that
\begin{equation}
	\left[\frac{\Pi_1\Pi_2\Pi_1}{\lambda_1(\Pi_1\Pi_2\Pi_1)}\right]^n \to \Pi_0, n\to \infty,
\end{equation}
where $\Pi_0$ is the projector of the eigenspace of $\Pi_1\Pi_2\Pi_1$ with the maximal eigenvalue $\lambda_1$. Therefore,
\begin{equation}
	\braket{s_n|\Pi_1\Pi_2\Pi_1|s_n} \to \lambda_1(\Pi_1\Pi_2\Pi_1), n\to \infty.
\end{equation}
\end{proof}

For a given $c$-dimensional SI-C set which is considered here, and another realization of its orthogonality relations and quantum violation of a given witness with errors in experiment, we make the assumption of complete context for this realization.
\begin{assumption}[Complete context]
There is a complete context, i.e., a context with $c$ projectors, in which the relations of mutual exclusivity are perfect.
\end{assumption}
This can be guaranteed if we have a device whose different outcomes correspond to different projectors in this complete context. With out loss of generality, we label this complete context with $\{1,2,\ldots,c\} $.

As we can see, the proof of self-test only relies on two conditions:
\begin{enumerate}
\item The relations of exclusivity hold, i.e., $\Pi_i\Pi_j = 0$, $\forall (i,j)\in E$.
\item Each projector $\Pi_i$ can be decomposed into block form on a complete basis (a complete context in the scenario). That is,
\begin{equation}
	\Pi_i = L_i^\dagger L_i,\;\;\; L_i = [B_{1i}, B_{2i},\ldots,B_{ci}],
\end{equation}
where $B_{ti}$ is square invertible if $(i,t)\not\in E$, otherwise, $B_{ti} = 0$.
\end{enumerate}

In actual experiments, the first condition may not strictly hold due, i.e., to noise, so two mutually exclusive events of the ideal scenario may no be mutually exclusive.
Furthermore, the second condition is linked to the first and to the violation of the SI-C inequality.
If the first condition is valid, and the violation of the SI-C inequality is not too far from the violation in an ideal experiment, then the second condition is likewise true.
As a result, we have CFR {as we have shown before.}

For a given $(Q- \epsilon,\epsilon)$-SI-C realization, the relation of exclusivity $\Pi_i\Pi_j = 0$ is $(\epsilon,1/2) $-robust $\forall (i,j)\in E$. 
Since $ \Pi_i\Pi_j\Pi_i = (\Pi_j\Pi_i)^\dagger (\Pi_j\Pi_i)$, the maximal singular value of $\Pi_j\Pi_i$ is upper bounded by $\sqrt{\epsilon}$. That is, the first condition holds up to $\mathcal{O}(\sqrt{\epsilon})$.

Now we show that in a $(Q-\epsilon,\epsilon)$-realization, the second condition holds if $\epsilon < \min\{\epsilon_\tau, \epsilon_\nu\}$, where $\epsilon_\tau$ and $\epsilon_\nu$ indicate the invertibility of the blocks $B_{ki}$ in the decomposition and the completeness of the basis for the decomposition.
\begin{definition}
For a given graph $G$ and a set of weights $\{w_i\}$,
the tolerance function $\tau_{ij}(\theta,\epsilon)$ is defined as follows:
\begin{align}\label{eq:sdptau}
	\tau_{ij}(\theta,\epsilon) := &\min X_{ij} \nonumber \\
	\text{\rm subject to}& \sum\nolimits_{k=1}^n w_k X_{kk} = \theta, \nonumber \\
	&X_{kk} = X_{0k},\;\;\;1\le k \le n, \nonumber \\
	& X_{ij} \ge 0, \;\;\;X_{00} = X_{0i} = 1, \nonumber \\ 
	&|X_{kt}| \le \epsilon,\;\;\; (k,t)\in E, \nonumber \\
	& X \in \mathcal{S}_{+}^{1+n}.
\end{align}
\end{definition}

For CEG-18 and Peres-24, $\tau_{ij}(Q,0)$ is a strictly positive constant $\forall (i,j)\not\in E$ in both cases. 
For YO-13, we reach the same conclusion with the extra assumption in Eq.~\eqref{eq:yo13add}. When $\epsilon$ is not so large in comparison with $\tau_{ij}(Q,0)$, then the critical values of $\epsilon_\tau$ such that $\min_{(i,j)\not\in E}\tau_{ij}(Q- \epsilon, \epsilon) > 0$ are given in Table~\ref{tab:errorbound}.

For a given $(Q- \epsilon, \epsilon)$-realization, $\tau_{ij}(Q- \epsilon, \epsilon) > 0$ implies that, for any state $|s\rangle$ such that $\langle s|\Pi_i|s\rangle = 1$, we have $\langle s|\Pi_i \Pi_j \Pi_i|s\rangle > 0$. Therefore, $\Pi_i \Pi_j \Pi_i$ is positive definite in the subspace corresponding to $\Pi_i$. Hence, $\rank(\Pi_i\Pi_j\Pi_i)$ is no less than the dimension of this subspace, which is $\rank(\Pi_i)$. On the other hand, $\rank(\Pi_i\Pi_j\Pi_i) \le \rank(\Pi_i)$. Therefore,
\begin{equation}\label{eq:basisdec}
\rank(\Pi_i\Pi_j\Pi_i) = \rank(\Pi_i).
\end{equation}
Hence, $\rank(\Pi_j) \ge \rank(\Pi_i)$. If $\tau_{ji}(Q- \epsilon, \epsilon) > 0$ is also true, then we have $\rank(\Pi_j) = \rank(\Pi_i)$.
If the complement graph of the exclusivity $G$ is connected and $\tau_{ij}(Q-\epsilon,\epsilon) > 0$, $\forall (i,j)\not\in E$, we know that all the projectors should be of the same rank. If $\epsilon < \epsilon_\tau/2$, there is $\tau_0 > 0$ such that $\tau_{ij}(Q- \epsilon,\epsilon) > \tau_0$. This implies that the projection of $\Pi_j$ into the subspace spanned by $\Pi_i$ should be invertible and the inverse is bounded by $1/\tau_0$. In the case that $i=1,\ldots, c$, $\min\tau_{ij}(Q- \epsilon,\epsilon) = \min \lambda_{\min}(B_{ti}^\dagger B_{ti})$. Hence, $\lambda_{\min}(B_{ti}^\dagger B_{ti}) > \tau_0$ implies that $\lambda_{\max} [(B_{ti}^{-1})^\dagger B_{ti}^{-1}] \le 1/\tau_0$. Therefore, the maximal singular value $\sigma_{\max} (B_{ti}^{-1}) \le 1/\sqrt{\tau_0}$.

\begin{definition}
For a given graph $G$ and set of weights $\{w_i\}$,
the completeness function $\nu(\theta,\epsilon)$ is defined as
follows:
\begin{align}\label{eq:sdptau2}
	\nu(\theta,\epsilon):= & \min \sum_{k=1}^c X_{0k} \nonumber \\
	\text{\rm subject to}& \sum\nolimits_{k=1}^n w_k X_{kk} = \theta, \nonumber \\
	&X_{kk} = X_{0k},\;\;\;1\le k \le n, \nonumber \\
	& X_{ij} \ge 0,\;\;\; X_{00} = 1, \nonumber \\ 
	&|X_{kt}| \le \epsilon,\;\;\; (k,t)\in E, \nonumber \\
	& X \in \mathcal{S}_{+}^{1+n}.
\end{align}
\end{definition}
For a $(Q-\epsilon,\epsilon)$-realization, $\nu > 0$ for any small enough $\epsilon$. 
This means that any state is not orthogonal to all the projectors in the context $\{1,2,\ldots,c\} $. Under the complete context assumption that all the projectors in this context are orthogonal to each other, we have, $\sum_{i=1}^c \Pi_i = \id$. 
Consequentially, we have the block decomposition on this complete basis $\{\Pi_1, \Pi_2, \ldots, \Pi_c\}$.
Since $B_{ti}^\dagger B_{ti}$ is the non-trivial part of $\Pi_t \Pi_i \Pi_t$ whose maximal (diagonal) element is upper bounded by $\epsilon$, $|B_{ti}|_{\max} \le \sqrt{|B_{ti}^\dagger B_{ti}|_{\max}} \le \sqrt{\epsilon}$ if $(i,t)\in E$. {Here $|M|_{\max}$ is the max norm of the matrix $M$.}

In fact, rows of $L_i$ are a orthonormal basis of the subspace represented by $\Pi_i$. Then $\Pi_i\Pi_j\Pi_i \le \epsilon \id$ implies $L_i\Pi_j L_i^\dagger \le \epsilon \id$, which leads to 
\begin{equation}\label{eq:errorcase}
\sigma_{\max}(L_i L_j^\dagger) \le \sqrt{\epsilon},\ (i,j)\in E,
\end{equation}

Here we introduce the main idea of robustness analysis. As an example, we give the detailed analysis for YO-$13$ in the next section. The proof of self-testing in the ideal case implies that, the solution of Eq.~\eqref{eq:errorcase} is unique if $\epsilon = 0$. 
Since each $B_{ti}$ in $L_i$ there is invertible and the inverse is bounded, those facts lead to equations similar as in Eq.~\eqref{eq:unitariescond} but with error $O(\sqrt{\epsilon})$. Consequently, this results in the conclusion that all $\Pi_i$ is close to the ideal one up to $O(\sqrt{\epsilon})$. Note that, since a lot of substitution has been done in the proof of the self-testing in the ideal case, the constant in $O(\sqrt{\epsilon})$ depends on the number of equation in the proof of self-testing, i.e., the number of edges in the corresponding exclusivity graph. Apart from that, it depends on the upper bound of the inverse of $B_{ti}$ also for $(i,t)\not\in E$, which in turn depends on the error $\epsilon$ as we discussed before. All in all, our conclusion is that the self-testing is $\sqrt{\epsilon}$-robust when the error $\epsilon$ is small enough as suggested in Table~\ref{tab:errorbound}.


\begin{table}[]
\centering
\begin{tabular}{ccccc}
	\hline \hline
	SI-C set & BBC-21 & CEG-18 & Peres-24 & YO-13 \\
	\hline 
	$\epsilon_{\tau}$ & 0.00359 & 0.00557 & 0.00562 & 0.00296 \\
	$\epsilon_{\nu}$ & 0.00832 & 0.01527 & 0.01954 & 0.02325\\
	\hline\hline
\end{tabular}
\caption{Critical values of $\epsilon_{\tau}$ and $\epsilon_{\nu}$ such that $\min \tau_{ij}(Q-\epsilon_{\tau},\epsilon_{\tau}) > 0$ and $\nu(Q-\epsilon_{\nu},\epsilon_{\nu}) > 0$.}
\label{tab:errorbound}
\end{table}


\subsection{Robustness analysis for YO-13}\label{app:robustyo13}


As we discussed before, when the error $\epsilon \le \epsilon_\tau/2$ as shown in Table~\ref{tab:errorbound}, we can still have the decomposition
\begin{equation}
\Pi_i = L_i L_i^\dagger,\;\;\;\; L_i = [B_{1i}, B_{2i}, B_{3i}],
\end{equation}
where $\sigma_{\max}(B_{ti}) \le \sqrt{\epsilon}$ if $(t,i)\in E$, otherwise $B_{ti}$ is invertible and the inverse is bounded, i.e., $\sigma_{\max}(B_{ti}^{-1}) \le 1/\sqrt{\tau_0}$. Therefore, we can still, for example, change $L_i$ to $B_{ki}^{-1}L_i$, where $(k,i)\not\in E$, to simplify the the procedure of proof. Consequently,
\begin{equation}\label{eq:errorblock}
\sigma_{\max}(B_{ti}) \le \sqrt{\epsilon/\tau_0} \text{ if } (t,i)\in E.
\end{equation}

Without loss of generality, we assume that
\begin{equation}
L_1 = [\id_d,0,0],\;\;\; L_2 = [0,\id_d,0],\;\;\; L_3 = [0,0,\id_d].
\end{equation}
For simplicity, in the following derivation, we will omit the subindex $d$.

Since $L_1 \perp L_4$ and $L_1 \perp L_5$ up to some error (for simplicity, hereafter we will omit `up to some error'), 
\begin{equation}
L_4 = [O(\sqrt{\epsilon}),\id,A],\;\;\; L_5 = [O(\sqrt{\epsilon}),\id,A'],
\end{equation}
where $O(\sqrt{\epsilon})$ represents either a number or a matrix whose maximal singular value is not larger than $O(\sqrt{\epsilon})$. For brevity, hereafter we will denote $o = O(\sqrt{\epsilon})$.
$L_4\perp L_5$ implies 
\begin{equation}
\sigma_{\max}(AA' + \id + o) = o.
\end{equation}
Therefore,
\begin{equation}
\sigma_{\max}(A' + A^{-1}) \le O(\sqrt{\epsilon/\tau_0}) = o.
\end{equation}
In this sense, we denote $A' = -A^{-1} + o$.

Similarly,
\begin{align}
& L_6 = [\id,o,B],\;\;\; L_7 = [\id,o,-(B^{-1})^\dagger + o], \nonumber\\ 
& L_8 = [\id,C,o],\;\;\; L_9 = [\id,-(C^{-1})^\dagger + o, o].
\end{align}
Let us assume that
\begin{align}
& L_{10} = [\id,D,E],\;\;\; L_{11} = [\id,F,G], \nonumber \\ 
& L_{12} = [\id,H,I],\;\;\; L_{13} = [\id,J,K].
\end{align}
Then, 
$L_{10} \perp L_{6}$ and $L_{10} \perp L_{8}$ imply 
\begin{equation}
D=-(C^{-1})^\dagger+o,\;\;\;\; E=-(B^{-1})^\dagger+o,
\end{equation}
$L_{11} \perp L_{7}$ and $L_{11} \perp L_{9}$ imply 
\begin{equation}
F=C+o,\;\;\;\;G=B+o,
\end{equation}
$L_{12} \perp L_{6}$ and $L_{12} \perp L_{9}$ imply 
\begin{equation}
H=C+o,\;\;\;\;I=-(B^{-1})^\dagger+o,
\end{equation}
$L_{13} \perp L_{7}$ and $L_{13} \perp L_{8}$ imply 
\begin{equation}
J=-(C^{-1})^\dagger+o,\;\;\;\; K=B+o.
\end{equation}
In addition, $L_4 \perp L_{10}$ and $L_4 \perp L_{11}$ imply
\begin{equation}
D = -EA^\dagger+o,\;\;\;\; F = -GA^\dagger+o.
\end{equation}
Therefore,
\begin{equation}
(C^{-1})^\dagger = - (B^{-1})^\dagger A^\dagger+o,\;\;\;\;C = -BA^\dagger+o,
\end{equation}
which implies
\begin{equation}
C = -BA^{-1} + o = -BA^\dagger + o.
\end{equation}

Then, $L_4 \perp L_{12}$ and $L_5 \perp L_{13} $ imply 
\begin{equation}
H=IA^{-1}+o,\;\;\;\;J=KA^{-1}+o.
\end{equation}
This implies
\begin{equation}
C=-(B^{-1})^\dagger A^{-1} + o,\;\;\;\;-(C^{-1})^\dagger = BA^{-1} + o,
\end{equation}
which implies
\begin{equation}
C = -(B^{-1})^\dagger A^{-1} + o = -(B^{-1})^\dagger A^\dagger + o.
\end{equation}
Hence,
\begin{equation}
AA^\dagger = \id+o,\;\;\;\; BB^\dagger = \id + o.
\end{equation}
Since we still have the freedom to rotate the subspaces corresponding to $L_2$ and $L_3$, we can set $A$ and $B$ to be Hermitian. Therefore, the square of any eigenvalue of $A$ and $B$ is $1+o$, which means that the eigenvalues of $A, B$ are either $1+o$ or $-1+o$. Without loss of generality, we assume they are all $-1+o$. Hence, $A=-\id+o$ and $B = -\id+o$. Consequently,
\begin{equation}
C, F, G, H, K = -\id+o,\;\;\;\; D, E, I, J= \id+o. 
\end{equation}
Then, by definition of $L_i$, we know that its difference between the ideal one is also $o$, i.e., $O(\sqrt{\epsilon})$. So do $L_iL_i^\dagger$ and its inverse since each element in $L_i$ is bounded.

The fact that 
\begin{equation}
\Pi_i = L_i^\dagger (L_iL_i^\dagger)^{-1} L_i
\end{equation}
implies that $\Pi_i$ is also $O(\sqrt{\epsilon})$ close to the ideal one in the sense of maximal singular value.


\section{Tools used in the proof of Result~3}
\label{app:D}


Initially, it is important to note that a witness can be naturally constructed for any given complete KS set. Furthermore, the maximal violation of this witness is attained by any set of projectors in an arbitrary dimension that satisfies the orthogonality and completeness relations according to the orthogonality graph $G$ associated with the given KS set. In Result~3, we specifically refer to the CFR of a complete KS set with respect to the following particular witness.

Given a complete KS set, consider an SI-C witness of the form
\begin{equation}\label{Wks}
{\cal W}_{KS} = \sum_i w_i \ P(\Pi_i=1) \leqslant \alpha(G,\vec{w}) ,
\end{equation}
where $w_i$ is the number of bases in which projector $\Pi_i$ appears, and $\alpha(G,\vec{w})$ is the independence number of the orthogonality graph $G$ of the projectors $\{\Pi_i\}$. 
Let us denote the maximal cliques, each of which corresponds to a complete basis, by the subsets of the set of vertices $T_x \subset V$, wherein $x=1,\dots,m,$ and there are $m$ number of complete bases. Now the quantity
\beq \label{eq:ksset1}
\sum_i w_i \ \Pi_i 
= \sum_{x} \left( \sum_{i\in T_x} \Pi_i \right) = m \id,
\eeq 
where we have used the fact that the sum of projectors in each of these maximal cliques is the identity. Hence, the quantum value of the witness \eqref{Wks} is $m$ for any state. Note that $m$ is strictly greater than $\alpha(G,\vec{w})$, which follows from the definition of the KS set. Furthermore, since the maximal value of $\sum_{i\in T_x} P(\Pi_i=1)$ cannot be more than $1$, $m$ serves as the quantum upper bound. On the reverse direction, the witness \eqref{Wks} is maximally violated by a set of projectors $\{\Pi_i\}$ only when
\be 
\sum_{i\in T_x} \Pi_i = \id, \ \forall x.
\ee
Additionally, these projectors adhere to the orthogonality graph. Thus, any set of projectors providing the maximum violation of witness \eqref{Wks} must be a KS set according to the orthogonality graph $G$.


\subsection{Concepts and previous results}


\begin{definition}[Bipartite game]
A bipartite game $G = (X \times Y, A \times B, W)$ is a game involving two players, Alice and Bob. In each round of the game, Alice receives an input $x \in X$ and provides an output $a \in A$, and Bob receives an input $y \in Y$ and provides an output $b \in B$. Alice and Bob win the round if the inputs and outputs satisfy a winning condition $W \subseteq (X \times Y) \times (A \times B)$.
\end{definition}

\begin{definition}[Wining strategy]
A winning strategy for a bipartite game $G = (X \times Y, A \times B, W)$
is a strategy according to which for every $(x, y) \in X \times Y$, Alice and Bob output $a$ and $b$, respectively, such that $(x, y, a, b) \in W$.
\end{definition}

It is known (see, e.g., \cite{aolita2012pra}) that, for any KS set, there exists a context-projector KS game with a quantum winning strategy and no classical winning strategy.
Consider the graph of orthogonality $G$ of a complete KS set of $n$ projectors $\{\Pi_i\}_{i=1}^n$ in $\mathbbm{C}^d$. Each projector is represented by a vertex in the graph, and orthogonal projectors are represented by adjacent vertices. A clique in $G$ represents a set of mutually orthogonal projectors. A maximal clique of $G$ is a clique that cannot be extended by including one more adjacent vertex.
Let $T_x$ denote maximal cliques of $G$, having $|T_x|$ distinct elements. The elements of $T_x$ are $\{T_{x,a}\}$, where $x$ is the label of the clique and $a= 1,\ldots,|T_x|$ indicates the elements in that clique. In other words, $T_x := \{T_{x,1},\ldots, T_{x,|T_x|}\}$.
Suppose there are $m$ different maximal cliques, i.e., $x=1,\dots,m$. According to the graph $G$, Alice is given a maximal clique from the set of maximal cliques and has to output one of the vertices from that clique, while Bob receives a vertex from that clique and outputs either $0$ or $1$. They win the game if Alice outputs the vertex that is given to Bob and Bob outputs $1$, or if Alice outputs a vertex that is not given to Bob and Bob outputs $0$. Using the formal notation, Alice receives $x \in \{1,\dots,m\}$ and outputs $a \in \{1,\dots,|T_x|\}$, while Bob receives $y\in \{1,\dots,n\}$ and outputs $b\in \{0,1\}$. The payoff function that they aim to maximize is given by
\be 
\B = \sum_{a,b,x,y} c_{a,b,x,y} \ p(a,b|x,y)
\ee where 
\be
c_{a,b,x,y} = 
\begin{cases}
1, \quad \text{ if } y \in T_x, \ y \neq T_{x,a} \text{ and } b=0, \\
1, \quad \text{ if } y \in T_x, \ y = T_{x,a} \text{ and } b=1, \\
0, \quad\; \text{otherwise.}
\end{cases}
\ee

The following quantum strategy achieves perfect winning. Alice and Bob share the maximally entangled state
\begin{equation}
\ket{\phi^+_d} = \frac{1}{\sqrt{d}} \sum^{d-1}_{i=0} |ii\rangle .
\end{equation} 
Alice measures the following observable $A_x$ for input $x$ that corresponds to the basis $T_x$,
\begin{equation}\label{Ax}
A_x = \{ \Pi^*_{T_{x,1}}, \dots, \Pi^*_{T_{x,|T_x|}}\}.
\end{equation}
And Bob measures
\begin{equation}\label{By}
B_y = \{\id-\Pi_y,\Pi_y\}
\end{equation} 
for input $y$. Using the fact that $(A\otimes B)|\phi^+\ra = (\id \otimes BA^T) |\phi^+\ra$ for any operators $A,B$, we obtain the winning conditions for every pair of inputs.

To show that no winning classical strategy (without communication) exists, note that the best classical (local) strategy can be assumed to be deterministic, in which Bob assigns $0$ or $1$ values to the $n$ vertices. To win every round of the game, no two orthogonal projectors can be assigned both $1$, since every pair of projectors belongs to at least one basis in a complete KS set. Moreover, only one projector is assigned $1$ in every context. Such an assignment is impossible for a generalized KS set.


\subsection{Proof of Result~3}


The implication --- if a complete KS set cannot be certified with CFR then the corresponding context-projector KS game does not admit Bell self-testing --- is straightforward. The uncharacterized KS set acting on $\mathbbm{C}^D$ is denoted by $\{\Pi_i\}$, and the reference KS set acting on $\mathbbm{C}^{d}$ is denoted by $\{\overline{\Pi}_i\}$. Let us assume that these two KS sets are not connected by unitary transformations, that is,
\begin{equation}\label{ne}
\nexists \ U, \quad U \Pi_i U^\dagger = \overline{\Pi}_i \otimes \id \oplus \overline{\Pi}^*_i \otimes \id, \ \forall i.
\end{equation}
Now, consider the following two realizations of a context-projector KS game where the local observables are constructed from $\{\Pi_i\}$ and $\{\overline{\Pi}_i\oplus \overline{\Pi}^*_i\}$ according to \eqref{Ax}-\eqref{By}, and the shared states are $\ket{\phi^+_D}$ and $(1/2) \ket{\phi_d^+}\oplus \ket{\phi_d^+} $, respectively. Due to \eqref{ne}, there is no local unitary transformation on each side that can map the local measurements in one quantum strategy to the other. Therefore, the context-projector KS game does not admit Bell self-testing. 

To show the reverse implication, it suffices to establish that in any quantum winning strategy, Bob's measurements \eqref{By} $\{\id-B_y, B_y\}$ and Alice's measurements $\{A^a_x\}$ should be projective and constitute a KS set. Consequently, it follows that if there is no unitary transformation between two quantum strategies, then the KS set does not satisfy CFR.

Let $\rho$ be the shared state in the quantum strategy, and $d_A$ and $d_B$ be the local dimension of the reduced states. Without loss of generality, we can assume that the POVMs of Alice and Bob are in their respective $d_A$-dimensional and $d_B$-dimensional Hilbert spaces. If not, we can consider the projection of these POVMs into the corresponding subspaces.

\textit{Bob's measurements.} Let us denote $\rho_x^a = \tr_A(\rho A_x^a)$, the unnormalized reduced state on Bob's side when outcome $a$ is observed for the measurement setting $x$. The fact that $\sum_a A_x^a = \id$ implies
\begin{equation}\label{eq:sum1}
\sum_a\tr_B(\rho_x^a) = \tr_B\left(\sum_a \rho_x^a\right) = \tr\left(\rho \sum_a A_x^a\right) = 1.
\end{equation}
From the winning conditions of the context-projector game, we know that
\begin{equation}\label{eq:optcond}
p(\bar{a},1|x,y) + \sum_{a\neq \bar{a}} p(a,0|x,y) = 1,
\end{equation}
where $y = T_{x,\bar{a}}$, which translates to
\begin{equation}\label{eq:sum2}
\tr(\rho_x^{\bar{a}} B_y) + \sum_{a\neq \bar{a}} \tr\left(\rho_x^a (\id-B_y)\right) = 1.
\end{equation}
We represent $S_{x,a}$ as the subspace spanned by $\rho_x^a$, where $\id_{x,a}$ stands for the identity operator in $S_{x,a}$. Additionally, $B|_S$ denotes the restriction of the operator $B$ to the subspace $S$.
By combining Eqs.~\eqref{eq:sum1} and \eqref{eq:sum2}, we obtain
\be 
\label{aeq}
\tr(\rho_x^{\bar{a}} B_y) = \tr(\rho_x^{\bar{a}}) + \tr\left(\sum_{a\neq \bar{a}}\rho_x^{a} B_y\right) .
\ee 
Since $B_y \preceq \id$, it follows from Eq.~\eqref{aeq} that
\begin{equation}
B_y|_{S_{x,\bar{a}}} = \id_{x,\bar{a}},\ B_y|_{\tilde{S}_{x,\bar{a}}} = 0,
\end{equation}
where
\begin{equation}
\tilde{S}_{x,\bar{a}} = \oplus_{a\neq \bar{a}} S_{x,a}.
\end{equation}
Therefore, $S_{x,a} \perp {S}_{x,\bar{a}} $ if $a\neq \bar{a}$. Moreover, since 
$\tilde{S}_{x,\bar{a}}\oplus S_{x,\bar{a}}$ is the full space of Bob's local system,
\begin{equation}\label{ByP}
B_y = \id_{x,\bar{a}}.
\end{equation}
This leads to the orthogonality conditions, $B_yB_{y'} = 0$ whenever $y\neq y'$ and $y,y'\in T_x$. In addition, due to the fact that $\sum_a \rho_x^a$ is of full rank, the normalization condition also holds. That is,
\begin{equation}\label{Bysum}
\sum_{y\in T_x} B_y = \sum_a \id_{x,a} = \id.
\end{equation}

\textit{Alice's measurements.} Similarly, let us denote the reduced states on Alice's side by $\sigma_y = \tr_B(\rho B_y), \bar{\sigma}_y = \tr_B(\rho (\id-B_y))$, for the measurement setting $y$, such that
\begin{equation}
\tr(\sigma_y) + \tr(\bar{\sigma}_y) = 1.
\end{equation}
The winning condition Eq.~\eqref{eq:sum2} implies
\begin{equation}
\tr(\sigma_y A_x^{\bar{a}}) + \sum_{a\neq \bar{a}}\tr(\bar{\sigma}_y A_x^a) = 1,
\end{equation}
which leads to
\begin{equation}
\tr(\sigma_y A_x^{\bar{a}}) = \tr(\sigma_y),\;\;\;\; \tr(\bar{\sigma}_y A_x^{\bar{a}}) = 0.
\end{equation}
Hence,
\begin{equation}
A_x^{\bar{a}}|_{S_y} = \id_y,\;\;\;\;A_x^{\bar{a}}|_{\tilde{S}_y} = 0,
\end{equation}
where $S_y, \tilde{S}_y$ are the space spanned by $\sigma_y, \bar{\sigma}_y$, respectively. Moreover, we can infer that $S_y \perp S_{y'}\ \forall y,y'\in T_x, y\neq y'$. Since $\sigma_y + \bar{\sigma}_y = \tr_B(\rho)$ is of full rank, we know that
\begin{equation}\label{AxaP}
A_x^{\bar{a}} = \id_y.
\end{equation}
Subsequently, we have the orthogonality condition $\id_{y_1} \id_{y_2} = 0$ whenever $y_1 = T_{x,a_1}, y_2 = T_{x,a_2}$ and $a_1 \neq a_2$, and the normalization condition
\be \label{Axasum}
\sum_{a} A_x^a = \sum_{y\in T_x} \id_y = \id .
\ee 
In total, Eqs.~\eqref{ByP}, \eqref{Bysum}, \eqref{AxaP}, and \eqref{Axasum} imply that $\{B_y\}$ and $\{A_x^{\bar{a}}\}$ are two realizations of the KS set, and thus, if the local measurements in a quantum winning strategy cannot be Bell self-tested, then there exist inequivalent realizations of the KS set.


\subsection{Bell self-testing of the maximally entangled state}


If specific conditions are met by the complete KS set, it becomes possible to self-test the maximally entangled state. To simplify matters, here, we will not delve into the necessary and sufficient conditions. Instead, we will demonstrate that the maximally entangled state can be reliably self-tested if a complete KS set admits CFR and satisfies the following criteria: In one realization of the KS set, there exist two bases made of rank-one real projectors, $\{|i\rangle\!\langle i|\}_{i=1}^d$ and $\{|v_i\ra\!\la v_i|\}_{i=1}^d$, such that $\la i|v_j\ra \neq 0$ for all $i,j$. Note that, without loss of generality, we can consider one basis to be the computational basis.

To prove it, we suppose that the uncharacterized projectors $\{A_x^{\bar{a}}\}$ and $\{B_y\}$ act on $\mathbbm{C}^{d_A}$ and $\mathbbm{C}^{d_B}$, respectively, and the unknown shared state is $\rho$. Since the set of projectors admits Bell self-testing, there exist local unitaries $U_A$ and $U_B$ such that
\be 
U_A(A_x^{\bar{a}})U^\dagger_A = \overline{\Pi}_y\otimes \id_{\frac{d_A}{d}} , \ 
U_B(B_y)U_B^\dagger = \overline{\Pi}_y\otimes \id_{\frac{d_B}{d}} ,
\ee 
wherein $\{\overline{\Pi}_y\}$ are the projectors in those two bases in the reference $d$-dimensional KS set. Let us denote the reduced state of $(U_A\otimes U_B)(\rho)(U_A\otimes U_B)^\dagger$ onto the subspace $\mathbbm{C}^{d}\otimes \mathbbm{C}^{d}$ by $\rho_{AB}$. 

The winning conditions given by Eq.~\eqref{eq:optcond} implies
\be \label{eq:optcond1}
p(a,1|x,y) = 0, \text{ if } y\neq T_{x,a}.
\ee 
The relation must hold for any state $|\phi\ra$ that belongs to the support of $\rho_{AB}$, that is,
\beq \label{eq:PyPy'psi}
\forall (y, y')\in E, \ (\overline{\Pi}_y\otimes \overline{\Pi}_{y'}) |\phi\rangle = 0 .
\eeq 
After substituting $|\phi\rangle = \sum_{i,j} c_{ij} |ij\ra$ and the computational basis $\{|i\rangle\!\langle i|\}$ in Eq.~\eqref{eq:PyPy'psi}, we obtain that
\be \label{psiij0}
c_{ij} = 0, \text{ for } i\neq j.
\ee 
Therefore, we can express $\ket{\phi}$ as
\be 
|\phi\ra = (\id \otimes S) |\phi^+_d\ra,
\ee 
where $S$ is a diagonal matrix whose elements can be taken to be positive by exploiting the freedom of local unitary. Taking $y$ and $y'$ from the other basis $\{|v_i\ra\!\la v_i|\}_{i=1}^d$, it follows from \eqref{eq:optcond1} that
\be 
(|v_i\ra\!\la v_i| \otimes |v_j\ra\!\la v_j| ) (\id \otimes S) |\phi^+_d\ra = 0.
\ee 
Using the fact that $(A\otimes B)|\phi^+\ra = (\id \otimes BA^T) |\phi^+\ra$, we obtain 
\be 
(\id \otimes |v_j\ra\!\la v_j| S |v_i\ra\!\la v_i|) |\phi^+_d\ra = 0,
\ee 
which implies 
\be 
\la v_j| S |v_i\ra\ = 0 , \forall i\neq j.
\ee 
This relation can be rephrased as
\be 
\la j| U^\dagger S U |i\ra\ = 0 , \forall i\neq j,
\ee 
where $U$ is the unitary such that $|v_i\ra = U|i\ra$. Therefore, $U^\dagger S U$ is also diagonal. Furthermore, since unitary does not change the eigenvalues, there exists a permutation matrix $T$ such that $V = UT$ and
\begin{equation}
V^\dagger SV = S \Leftrightarrow [S,V] = 0 .
\end{equation}
Finally, due to the fact that $\la i|v_j\ra \neq 0$ for all $i,j$, all the elements in $U$ are non-zero, and thus, all the elements in $V$ are also non-zero. Consequently, the commutation relation $[S,V] = 0$ holds only if all the eigenvalues of $S$ should be the same, i.e., $S = \id$. This means that $|\phi\rangle$ must be $|\phi^+_d\rangle$. This analysis holds for any state $\ket{\phi}$ that belongs to the support of $\rho_{AB}$. As a result, $\rho_{AB} = |\phi^+_d\ra\la\phi^+_d|$, which implies
\be 
(U_A\otimes U_B)(\rho)(U_A\otimes U_B)^\dagger = (|\phi^+_d\ra \la \phi^+_d|) \otimes \rho_\text{aux},
\ee 
for some junk state $\rho_{\rm aux}$.


\bibliographystyle{apsrev4-2}


\end{document}